\definecolor{red2}{rgb}{0.7,0.3,0.2}
\definecolor{green2}{rgb}{0.1,0.7,0.3}
\definecolor{blue2}{rgb}{0.2,0.4,0.7}
\definecolor{eyecancerpink}{rgb}{1.0, 0, 1.0}
\newcommand{\goodmark}{\textcolor{blue2}{\ding{51}}}%
\newcommand{\badmark}{\textcolor{red2}{\ding{55}}}%
\newcommand{\hide}[1]{}
\newcommand{\bbfZ}{\mathbb{Z}}
\newcommand{\bbfS}{\mathbb{S}}
\newcommand{\bbfD}{\mathbb{D}}
\newcommand{\bbfRU}{{[0, 1]}}
\newcommand{\calM}{\mathcal{M}}
\newcommand{\od}{\overline{\mathbf{d}}}
\newcommand{\oD}{\overline{\mathbf{D}}}
\newcommand{\calX}{\mathcal{X}}
\newcommand{\calG}{\mathcal{G}}
\newcommand{\Obs}{\Omega}
\newcommand{\obs}{\omega}
\newcommand{\blank}{\llcorner\!\!\lrcorner}
\newcommand{\zpython}{\textsc{Z3Python}\xspace}
\algnewcommand\algorithmicmatch{\textbf{match}}
\algnewcommand\algorithmicwith{\textbf{with}}
\algnewcommand\algorithmiccase{\textbf{case}}
\algnewcommand{\IfThenElse}[3]{
  \State \algorithmicif\ #1\ \algorithmicthen\ #2\ \algorithmicelse\ #3}
\newcommand{\lCase}[2]{\State \textbf{case} {#1}{:} {#2}}
\newcommand{\ldp}[1]{{\color{green}DP: #1}}
\begin{document}

\title{Verifying Pufferfish Privacy in Hidden Markov Models}

\author{
Depeng Liu\inst{1,2} \and
Bow-Yaw Wang \inst{3} \and
Lijun Zhang\inst{1,2}}

\authorrunning{D. Liu et al.}

\institute{Institute of Software, Chinese Academy of Sciences, Beijing, China
\email{\{liudp,zhanglj\}@ios.ac.cn}\\ 
\and
University of Chinese Academy of Sciences, Beijing, China \and
Institute of Information Science, Academia Sinica, Taipei, Taiwan
\email{bywang@iis.sinica.edu.tw}\\
}

\maketitle

\begin{abstract}
Pufferfish is a Bayesian privacy framework for designing and analyzing
privacy mechanisms. It refines differential privacy, the current gold standard
in data privacy, by allowing explicit prior knowledge in privacy analysis.
In practice, privacy mechanisms often need be
modified or adjusted to specific applications. Their privacy risks have
to be re-evaluated for different circumstances.
Privacy proofs can thus be
complicated and prone to errors. Such tedious tasks are burdensome
to average data curators. In this paper, we propose an automatic
verification technique for Pufferfish privacy. We use hidden Markov
models to specify and analyze discrete mechanisms in Pufferfish privacy. We show that the
Pufferfish verification problem in hidden Markov models is NP-hard.
Using Satisfiability Modulo Theories solvers, we propose an
algorithm to verify privacy requirements. We implement our algorithm
in a prototypical tool called FAIER, and analyze several classic privacy mechanisms in Pufferfish privacy.
Surprisingly, our analysis show that na\"ive discretization of well-established privacy
mechanisms often fails, witnessed by counterexamples generated by FAIER. In discrete \emph{Above Threshold},
we show that it results in absolutely
no privacy. Finally,
we compare our approach with state-of-the-art tools for differential privacy, and show that our verification technique can be efficiently combined with these tools for the purpose of certifying counterexamples and finding a more precise lower bound for the privacy budget $\epsilon$.
\end{abstract}


\section{Introduction}
\label{section:introduction}





Differential privacy is a framework for designing and analyzing privacy
measures~\cite{DR:14:AFDP,D:06:DP}. In the framework, data publishing
mechanisms are formalized as randomized algorithms. On any input data
set, such mechanisms return randomized answers to queries.
In order to preserve privacy, differential privacy
aims to ensure that similar output distributions are
yielded on similar input data sets.
Differential privacy moreover allows data curators
to evaluate privacy and utility quantitatively. The framework
has attracted lots of attention from academia and industry such as Microsoft~\cite{DKY:17:CTDP} and Apple~\cite{apple}.


Pufferfish is a more recent privacy framework which refines
differential privacy~\cite{KM:14:PFMPD}. In differential privacy,
there is no explicit correlation among entries in data sets during
privacy analysis.
The no free lunch theorem~\cite{KM:11:NFLDP}
in data privacy shows that prior knowledge about data sets is
crucial to privacy analysis. The Pufferfish privacy framework hence allows
data curators to analyze privacy with prior knowledge about data
sets. Under the Bayesian privacy framework, it is shown that
differential privacy preserves the same level of privacy if there is
no correlation among entries in data sets.

\vspace{-2pt}
For differential and Pufferfish privacy, data publishing
mechanisms are analyzed --often on paper-- with sophisticated mathematical tools.
The complexity of the problem is high~\cite{GNP:20:CVLPDP}, and moreover,
it is well-known that such proofs are very subtle and error-prone.
For instance, several published variations of differentially private
mechanisms are shown to violate privacy~\cite{CM:15:ppvsvt,LSL:17:usvtdp}. In order
to minimize proof errors and misinterpretation, the
formal method community has also started to develop techniques for
checking differentially private
mechanisms, such as verification techniques based on approximate couplings~\cite{AH:17:SCPDP,BGGHS:16:PDPPC,BGHJP:16:PLTDP,BKO:12:prrdp,BKO:12a:prrdp,FCG:21:CRSEDP}, 
randomness alignments~\cite{WDKZ:20:CDP,WDW:19:shadowdp,ZK:17:lightdp}, model checking~\cite{LWZ:18:MCDPP}
as well as those with well-defined programming semantics~\cite{BCJSV:20:ddppfio,MM:19:PPADP}
and techniques based on testing and searching~\cite{BTDPM:18:dpfinder,BSBV:21:dpsniper,DWWZK:18:DVDP,ZRH:20:DPCheck}.
\vspace{-2pt}

Reality nevertheless can be more complicated than mathematical
proofs.
Existing privacy mechanisms hardly fit their
data publishing requirements perfectly.
These algorithms may be implemented differently when used
in practice.
Majority of differentially private mechanisms utilize
continuous perturbations by applying the Laplace mechanism.
Computing devices however only approximate continuous noises through floating-point
computation, which is discrete in nature. Care must be taken lest privacy should be lost during such
finite approximations~\cite{M:12:SLSBDP}. Moreover, adding continuous noises
may yield uninterpretable outputs for categorical or discrete
numerical data. Discrete noises are hence necessary for such data.
A challenging task for data curators
is to guarantee that the implementation (discrete in nature) meets the specification (often continuous distributions are used).
It is often time consuming -- if not impossible, to carry out privacy analysis for each modification. Automated verification and testing techniques are in this case a promising
methodology for preserving privacy.




\vspace{-3pt}
In this work, we take a different approach to solve the problems
above. We focus on Pufferfish privacy, and propose
a lightweight but automatic verification technique.
We propose a formal model for data publishing mechanisms and
reduce Pufferfish privacy into a verification problem for hidden
Markov models (HMMs).
Through our formalization, data curators can verify their
specialized privacy mechanisms without going through tedious
mathematical proofs.

\vspace{-3pt}
We have implemented our algorithm in a prototypical tool called FAIER (the pufferFish privAcy verifIER).
We consider privacy mechanisms for bounded discrete
numerical queries such as counting. For those queries, classical
continuous perturbations may give unusable answers or even lose
privacy~\cite{M:12:SLSBDP}. We hence discretize privacy mechanisms by
applying discrete perturbations on such queries.
We report case studies derived from
differentially private mechanisms. Our studies show that na\"ive
discretization may induce significant privacy risks. For the \emph{Above Threshold} example,
we show that discretization does not have any privacy at
all. For this example, our tool generates \emph{counterexamples} for an arbitrary small privacy budget $\epsilon$.
Another interesting problem for differential privacy is to find the largest lower bound of $\epsilon$,
below which the mechanism will not be differentially private.
We discuss how our verification approach can be efficiently combined with testing techniques to solve this problem.

\hide{
In~\cite{LWZ:18:MCDPP}, the authors propose Markov chains and Markov
decision processes to model data publishing mechanisms for
differential privacy. Several known
mechanisms are formalized as different Markov models and checked to
satisfy differential privacy.
}


\vspace{-2pt}
Below we summarize the main contributions of our paper:
\vspace{-2pt}
\begin{enumerate}[leftmargin=*]
\item We propose a verification framework for Pufferfish privacy by specifying privacy mechanisms as HMMs and analyzing privacy requirements in the models (Section~\ref{section:hmm}). To our best knowledge, the work of Pufferfish privacy verification had not been investigated before.
\item Then we study the Pufferfish privacy verification problem
on HMMs and prove the verification problem
to be NP-hard (Section~\ref{subsection:complexity}).
\item On the practical side, nevertheless, using SMT solvers,
we design a verification algorithm which automatically verifies Pufferfish privacy (Section~\ref{subsection:checking-pufferfish}).
\item The verification algorithm is implemented into the tool FAIER (Section~\ref{subsection:evaluation}). We then perform case studies of classic mechanisms, such as Noisy Max and Above Threshold.
Using our tool, we are able to catch privacy breaches of the specialized mechanisms (Section~\ref{subsection:noisymax} ~\ref{subsection:above-threshold}).
\item Compared with the state-of-the-art tools DP-Sniper~\cite{BSBV:21:dpsniper} and StatDP~\cite{DWWZK:18:DVDP} on finding the privacy
budget $\epsilon$ (or finding privacy violations) for differential privacy, our tool has advantageous performances in obtaining the most precise results within acceptable time for discrete mechanisms. We propose to exploit each advantage to the full to efficiently obtain a precise lower bound for the privacy budget $\epsilon$ (Section~\ref{section:experiments}).
\end{enumerate}

\section{Preliminaries}
\label{section:preliminaries}

A \emph{Markov Chain} $K = (S, p)$ consists of a finite set $S$ of
\emph{states} and a \emph{transition distribution} $p : S \times S
\rightarrow \bbfRU$ such that $\sum_{t \in S} p (s, t) = 1$ for every
$s \in S$.
A \emph{Hidden Markov Model} (HMM) $H = (K, \Obs, o)$ is a Markov chain $K =
(S, p)$ with a finite set $\Obs$ of \emph{observations} and an
\emph{observation distribution} $o : S \times \Obs \rightarrow \bbfRU$
such that $\sum_{\obs \in \Obs} o (s, \obs) = 1$ for every $s \in
S$. Intuitively,
the states of HMMs are not observable. External observers do not know
the current state of an HMM. Instead, they have a state distribution
(called \emph{information state})
$\pi : S \rightarrow [0, 1]$ with $\sum_{s \in S} \pi (s) = 1$
to represent the likelihood of each state in an HMM.

Let $H = ((S, p), \Obs, o)$ be an HMM and $\pi$ an initial state
distribution. The HMM $H$ can be seen as a (randomized) generator for
sequences of observations. The following procedure generates
observation sequences of an arbitrary length:

\vspace{-7pt}
\begin{enumerate}
\item $t \leftarrow 0$.
\item Choose an initial state $s_0 \in S$ by the initial state
  distribution $\pi$.
\item \label{hmm:repeat}
  Choose an observation $\omega_t$ by the observation distribution
  $o (s_t, \bullet)$.
\item Choose a next state $s_{t+1}$ by the transition distribution
  $p (s_t, \bullet)$.
\item $t \leftarrow t + 1$ and go to \ref{hmm:repeat}.
\end{enumerate}
\vspace{-7pt}

Given an observation sequence $\overline{\omega} =
\omega_0\omega_1\cdots\omega_k$ and a state sequence $\overline{s} =
s_0s_1\cdots s_k$, it is not hard to compute the probability of
observing $\overline{\omega}$ along $\overline{s}$ on an HMM $H = ((S,
p), \Obs, o)$ with an initial state distribution $\pi$. Precisely,
\vspace{-5pt}
\begin{align}
  &\Pr (\overline{\omega}, \overline{s} | H)
  = \Pr (\overline{\omega} | \overline{s}, H) \times \Pr(\overline{s}, H)
    \nonumber
  \\
  & = [o (s_0, \omega_0) \! \cdots\!
     o (s_k, \omega_k)] \!\times\!
      [\pi (s_0) p (s_0, s_1) \! \cdots\!
     p (s_{k-1}, s_k)]
    \nonumber
  \\
  & = \pi (s_0) o (s_0, \omega_0) \cdot
    p (s_0, s_1)   \!\cdots\!
    p (s_{k-1}, s_k) o (s_k, \omega_k).
    \label{eqn:prob-obs-state}
\end{align}

\vspace{-3pt}
Since state sequences are not observable, we are interested in the
probability $\Pr (\overline{\omega} | H)$ for a given observation
sequence $\overline{\omega}$. Using~(\ref{eqn:prob-obs-state}), we have
 $\Pr (\overline{\omega} | H) = \sum_{\overline{s} \in S^{k+1}}
  \Pr (\overline{\omega}, \overline{s} | H)$.
But the summation has $|S|^{k+1}$ terms and is hence inefficient to
compute. An efficient algorithm is available to
compute the probability $\alpha_t (s)$ for the observation
sequence $\omega_0\omega_1\cdots\omega_t$ with the state $s$ at time
$t$~\cite{R:89:ATHMM}. Consider the following definition:
\begin{align}
  \alpha_0 (s) &= \pi(s) o (s, \omega_0)
  \label{hmm:basis}
  \\
  \alpha_{t+1} (s') &= \left[
                      \sum_{s \in S} \alpha_t (s) p (s, s')
                      \right]
                      o (s', \omega_{t+1}).
  \label{hmm:inductive}
\end{align}
Informally, $\alpha_0 (s)$ is the probability that the initial state is
$s$ with the observation $\omega_0$. By induction, $\alpha_t (s)$ is
the probability that the $t$-th state is $s$ with the observation sequence
$\omega_0 \omega_1 \cdots \omega_t$. The probability of observing
$\overline{\omega} = \omega_0 \omega_1 \cdots \omega_k$ is therefore
the sum of probabilities of observing $\overline{\omega}$ over all
states $s$.
Thus $\Pr (\overline{\omega} | H) = \sum_{s \in S} \alpha_k (s)$.

\section{Pufferfish Privacy Framework}
\label{section:pufferfish}

Differential privacy is a privacy framework for design and analysis of
data publishing mechanisms~\cite{DR:14:AFDP}. Let $\calX$
denote the set of \emph{data entries}. A \emph{data set} of size $n$
is an element in $\calX^n$.
Two data sets $\od, \od' \in \calX^n$
are \emph{neighbors} (written $\Delta (\od, \od') \leq 1$) if
$\od$ and $\od'$ are identical except for at most one data entry. A \emph{data
publishing mechanism} (or simply \emph{mechanism}) $\calM$ is a
randomized algorithm which takes a data set $\od$ as inputs. A
mechanism satisfies $\epsilon$-differential privacy if its output
distributions differ by at most the multiplicative factor $e^\epsilon$
on every neighboring data sets.

\begin{definition}
  Let $\epsilon \geq 0$. A mechanism $\calM$ is
  \emph{$\epsilon$-differentially private} if for all $r \in
  \textmd{range}(\calM)$ and data sets $\od, \od' \in \calX^n$ with
  $\Delta (\od, \od') \leq 1$, we have
$    \Pr (\calM (\od) = r) \leq e^{\epsilon} \Pr (\calM (\od') = r).$
\end{definition}

Intuitively, $\epsilon$-differential privacy ensures similar
output distributions on similar data sets. Limited differential
information about each
data entry is revealed and individual privacy is hence preserved.
Though, differential privacy makes no assumption nor uses any prior knowledge about data sets.
For data sets with correlated data entries, differential
privacy may reveal too much information about individuals. Consider,
for instance, a data set of family members. If a family member has
contracted a highly contagious disease, all family are likely to have
the same disease. In order to decide whether a specific family member
has contracted the disease, it suffices to determine whether
\emph{any} member has the disease. It appears that specific
information about an individual can be inferred from differential
information when data entries are correlated. Differential privacy may be
ineffective to preserve privacy in such circumstances~\cite{KM:11:NFLDP}.

Pufferfish is a Bayesian privacy framework which refines differential
privacy. Theorem~6.1 in \cite{KM:14:PFMPD}
shows how to define differential privacy equivalently in Pufferfish framework.
In Pufferfish privacy, a random variable
$\oD$ represents a data set drawn from a distribution $\theta \in
\bbfD$. The set $\bbfD$ of distributions formalizes prior knowledge
about data sets, such as whether data entries are independent or
correlated. Moreover, a set $\bbfS$ of \emph{secrets} and
a set $\bbfS_{\textmd{pairs}} \subseteq \bbfS \times \bbfS$ of
\emph{discriminative secret pairs} formalize the information to be
protected. 
A mechanism $\calM$ satisfies $\epsilon$-Pufferfish privacy if its output distributions
differ by at most the multiplicative factor $e^{\epsilon}$ when
conditioned on all the secret pairs.

\begin{definition}
  Let $\bbfS$ be a set of secrets, $\bbfS_{\textmd{pairs}} \subset
  \bbfS \times \bbfS$ a set of
  discriminative secret pairs, $\bbfD$ a set of data set distributions
  scenarios, and $\epsilon \geq 0$, a mechanism $\calM$ is
  \emph{$\epsilon$-Pufferfish private}
  if for all $r \in \textmd{range}(\calM)$, $(s_i, s_j) \in
    \bbfS_{\textmd{pairs}}$, $\theta \in \bbfD$ with $\Pr (s_i |
    \theta) \neq 0$ and $\Pr (s_j | \theta) \neq 0$, we have
    \vspace{-2pt}
    \[
      \Pr (\calM (\oD) = r | s_i, \theta) \leq
      e^\epsilon \Pr (\calM (\oD) = r | s_j, \theta)
    \]
    \vspace{-2pt}
    where $\oD$ is a random variable with the distribution $\theta$.
\end{definition}

In the definition, $\Pr (s_i | \theta) \neq 0$ and $\Pr (s_j | \theta)
\neq 0$ ensure the probabilities $\Pr (\calM (\oD)$ $=$ $r | s_i, \theta)$
and $\Pr (\calM (\oD) = r | s_j, \theta)$ are defined. Hence
$\Pr (\calM (\oD) = r | s, \theta)$ is the probability of observing
$r$ conditioned on the secret $s$ and the data set distribution $\theta$.
Informally, $\epsilon$-Pufferfish privacy ensures similar
output distributions on discriminative secrets and prior knowledge.
Since limited information is revealed from prior knowledge, each pair
of discriminative secrets is protected.

\section{Geometric Mechanism as Hidden Markov Model}
\label{section:hmm}

We first recall in Section~\ref{sub:GM} the definition of geometric mechanism, a well-known discrete mechanism for differential privacy. In Section~\ref{sub:dp-mc}, we then recall an example exploiting Markov chains to model geometric mechanisms, followed by our modeling formalism and Pufferfish privacy analysis using HMMs in Section~\ref{sub:PP}.
\vspace{-5pt}
\subsection{Geometric Mechanism}
\label{sub:GM}
Consider a simple data set with only two data entries. Each
entry denotes
whether an individual has a certain disease. Given such a
data set, we wish to know how many individuals contract the disease in
the data set. More generally, a \emph{counting} query returns the
number of entries satisfying a given predicate in a data set
$\od \in \calX^n$. The number of individuals contracting the disease
in a data set is hence a counting query.
Note that the difference of counting query results on neighboring data
sets is at most $1$.

Counting queries may reveal sensitive
information about individuals. For instance, suppose we know John's
record is in the data set. We immediately infer that John has
contracted the disease if the query answer is $2$.
In order to protect privacy, several mechanisms are designed to
answer counting queries.

\hide{
In fact, probabilistic inference may reveal
too much information in this case. If the query answer is $1$, we also
know that John has $50\%$ of chance to have the disease. Since the
population has much lower rate of contracting the disease, John's
privacy is undoubtedly intruded by our probabilistic inference.

In our simple scenario, the number of persons contracted
the disease in a data set is a counting query.
}

Consider a counting query $f: \calX^n \rightarrow \{ 0, 1,
\ldots, n \}$. Let $\alpha \in (0, 1)$.
The \emph{$\alpha$-geometric mechanism} $\calG_f$ for the counting
query $f$ on the data set $\od$
outputs $f(\od) + Y$ on a data set $\od$ where $Y$ is a random variable with the geometric
distribution~\cite{GRS:09:UUPM,GRS:12:UUPM}:
$
\Pr[Y = y] = \frac{1 - \alpha}{1 + \alpha}\alpha^{|y|}
\textmd{ for } y \in \bbfZ
$.
For any neighboring data sets $\od, \od' \in \calX^n$, recall that $|
f (\od) - f (\od')| \leq 1$. If $f (\od) = f (\od')$, the
$\alpha$-geometric mechanism has the same output distribution for $f$
on $\od$ and $\od'$.
If $|f (\od) - f (\od')|= 1$, it is easy to
conclude that $\Pr (\calG_f (\od) = r) \leq
e^{-\ln \alpha} \Pr (\calG_f (\od') = r)$ for any neighboring $\od,
\od'$ and $r \in \bbfZ$.
The $\alpha$-geometric mechanism is $-\ln \alpha$-differentially
private for any counting query $f$. To achieve $\epsilon$-differential
privacy,
one simply chooses $\alpha = e^{-{\epsilon}}$.

The range of the geometric mechanism is $\bbfZ$. It may give nonsensical
outputs such as negative integers for non-negative queries.
The \emph{truncated $\alpha$-geometric mechanism over $\{ 0, 1,
  \ldots, n \}$}
outputs $f (\od) + Z$ where $Z$ is a random variable with the
distribution:
\vspace{-5pt}
\[
\Pr[Z = z] =
\left\{
  \begin{array}{ll}
    0 & \textmd{ if } z < - f (x) \\\vspace{0.5ex}
    \frac{\alpha^{f (x)}}{1 + \alpha} & \textmd{ if } z = -f (x)\\ \vspace{0.5ex}
    \frac{1 - \alpha}{1 + \alpha}\alpha^{|z|} &
    \textmd{ if } -f (x) < z < n - f (x)\\\vspace{0.5ex}
    \frac{\alpha^{n - f (x)}}{1 + \alpha} & \textmd{ if } z = n-f (x)\\
    0 & \textmd{ if } z > n - f (x)
  \end{array}
\right.
\]

\vspace{-5pt}
Note the range of the truncated $\alpha$-geometric mechanism is
$\{ 0, 1, \ldots, n \}$. The truncated $\alpha$-geometric mechanism is
also $- \ln \alpha$-differentially
private for any counting query $f$.
We will study several examples of this mechanism to get a better understanding of Pufferfish privacy
and how we use models to analyze it.

\subsection{Differential Privacy Using Markov Chains}\label{sub:dp-mc}
We present a simple example taking from~\cite{LWZ:18:MCDPP}, slightly adapted for analyzing
different models, i.e., the Markov chain and the hidden Markov model.
\vspace{-25pt}
\begin{figure}[htbp]
  \begin{subfigure}[c]{0.33\columnwidth}
  \renewcommand\arraystretch{1.2}
  \linespread{2}
  \parbox[][3cm][c]{\linewidth}{
  \centering
    \[
    \begin{array}{c|c|c|c|c|}
      \multicolumn{2}{c}{} &
      \multicolumn{3}{c}{output} \\
      \cline{2-5}
      &\hphantom{1/3} & \tilde{0} & \tilde{1} & \tilde{2} \\
      \cline{2-5}
      \multirow{3}*{\rotatebox{90}{$\mathit{input}$}}
      & 0 & {2}/{3}  & {1}/{6} & {1}/{6} \\
      \cline{2-5}
      & 1 & {1}/{3}  & {1}/{3} & {1}/{3} \\
      \cline{2-5}
      & 2 & {1}/{6}  & {1}/{6} & {2}/{3} \\
      \cline{2-5}
    \end{array}
    \]}
    \vspace{-5pt}
    \caption{$\frac{1}{2}$-Geometric Mechanism}
    \label{figure:geometric-mechanism-table}
  \end{subfigure}
  \begin{subfigure}[c]{.33\columnwidth}
  \parbox[][3cm][c]{\linewidth}{
  \centering
    \begin{tikzpicture}[->,>=stealth',shorten >=1pt,auto,node
      distance=2cm,node/.style={circle,draw}]
      \node[node ]
            (i0) at ( -2,  1) { $0$ };
      \node[node ]
            (i1) at ( -2,  0) { $1$ };
      \node[node ]
            (i2) at ( -2, -1) { $2$ };

      \node[node ]
            (o0) at (  0,  1) { ${\tilde{0}}$ };
      \node[node ]
            (o1) at (  0,  0) { ${\tilde{1}}$ };
      \node[node ]
            (o2) at (  0, -1) { ${\tilde{2}}$ };
      \draw[->,very thick] (i0) -- (o0);   
      \draw[->,ultra thin]  (i0) -- (o1);   
      \draw[->,ultra thin]  (i0) -- (o2);   

      \draw[->,semithick] (i1) -- (o0);               
      \draw[->,semithick] (i1) -- (o1);               
      \draw[->,semithick] (i1) -- (o2);               

      \draw[->,ultra thin]  (i2) -- (o0);   
      \draw[->,ultra thin]  (i2) -- (o1);   
      \draw[->,very thick] (i2) -- (o2);   

      \node at (0.9, 1.3) { $\frac{1}{6}$ };
      \draw[->,ultra thin] (0.5,  1) -- (1.3,  1);
      \node at (0.9, 0.3) { $\frac{1}{3}$ };
      \draw[->,semithick]  (0.5,  0) -- (1.3,  0);
      \node at (0.9,-0.7) { $\frac{2}{3}$ };
      \draw[->,very thick] (0.5, -1) -- (1.3, -1);
      \end{tikzpicture}}
    \caption{Markov Chain}
    \label{figure:geometric-mechanism-mc}
  \end{subfigure}
\hide{
  \begin{subfigure}[c]{.28\columnwidth}
  \parbox[][3cm][c]{\linewidth}{
    \centering
    \begin{tikzpicture}[->,>=stealth',shorten >=1pt,auto,node
      distance=2cm,node/.style={circle,draw}]
      \node[node, label={[shift={(-.55,-.3)}]$\blank$}]
            (i0) at ( -2,  1) { $0$ };
      \node[node, label={[shift={(-.55,-.3)}]$\blank$}]
            (i1) at ( -2,  0) { $1$ };
      \node[node, label={[shift={(-.55,-.3)}]$\blank$}]
            (i2) at ( -2, -1) { $2$ };

      \node[node, label={[shift={( .55,-.3)}]$\tilde{0}$}]
            (o0) at (  0,  1) { ${0'}$ };
      \node[node, label={[shift={( .55,-.3)}]$\tilde{1}$}]
            (o1) at (  0,  0) { ${1'}$ };
      \node[node, label={[shift={( .55,-.3)}]$\tilde{2}$}]
            (o2) at (  0, -1) { ${2'}$ };

      \draw[->,very thick] (i0) -- (o0);   
      \draw[->,ultra thin]  (i0) -- (o1);   
      \draw[->,ultra thin]  (i0) -- (o2);   

      \draw[->,semithick] (i1) -- (o0);               
      \draw[->,semithick] (i1) -- (o1);               
      \draw[->,semithick] (i1) -- (o2);               

      \draw[->,ultra thin]  (i2) -- (o0);   
      \draw[->,ultra thin]  (i2) -- (o1);   
      \draw[->,very thick] (i2) -- (o2);   

      \node at (1.5, 1.3) { $\frac{1}{6}$ };
      \draw[->,ultra thin] (1,  1) -- (2,  1);
      \node at (1.5, 0.3) { $\frac{1}{3}$ };
      \draw[->,semithick]  (1,  0) -- (2,  0);
      \node at (1.5,-0.7) { $\frac{2}{3}$ };
      \draw[->,very thick] (1, -1) -- (2, -1);
      \end{tikzpicture}}
    \caption{Hidden Markov Model}
    \label{figure:geometric-mechanism-hmm1}
  \end{subfigure}}
   \begin{subfigure}[c]{.3\columnwidth}
   \parbox[][3cm][c]{\linewidth}{
  \centering
    \begin{tikzpicture}[->,>=stealth',shorten >=1pt,auto,node
      distance=2cm,node/.style={circle,draw}]
      \node[node , label={ 10:2/3:$\tilde{0}$},label={0:1/6:$\tilde{1}$},label={-10:1/6:$\tilde{2}$}]
            (i0) at ( -2,  1) { $0$ };
      \node[node , label={ 10:1/3:$\tilde{0}$},label={0:1/3:$\tilde{1}$},label={-10:1/3:$\tilde{2}$}]
            (i1) at ( -2,  0) { $1$ };
      \node[node , label={ 10:1/6:$\tilde{0}$},label={0:1/6:$\tilde{1}$},label={-10:2/3:$\tilde{2}$}]
            (i2) at ( -2, -1) { $2$ };
      \end{tikzpicture}}
    \caption{Hidden Markov Model}
    \label{figure:geometric-mechanism-hmm2}
  \end{subfigure}
  \vspace{-2pt}
  \caption{Truncated $\frac{1}{2}$-Geometric Mechanism}
  \label{figure:geometric-mechanism}
\end{figure}
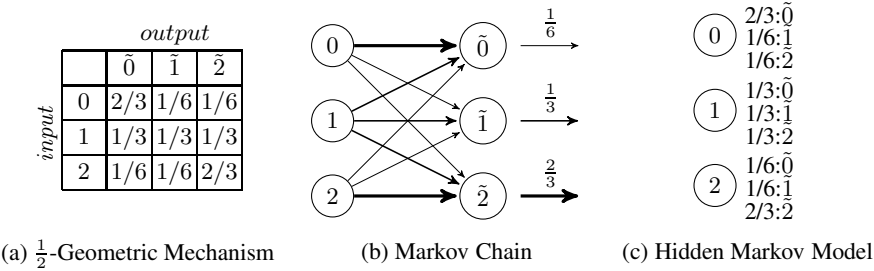
\vspace{-25pt}
\begin{example}
To see how differential privacy works, consider the truncated
$\frac{1}{2}$-geometric mechanism
(Fig.~\ref{figure:geometric-mechanism-table}). In the table, we
consider a counting query $f : \calX^2 \rightarrow \{ 0, 1, 2 \}$.
For any data set $\od$, the mechanism outputs $j$ when $f
(\od) = i$ with probability indicated at the $(i, j)$-entry in the
table. For instance, the mechanism outputs $\tilde{0}$, $\tilde{1}$,
and $\tilde{2}$ with
probabilities $\frac{2}{3}$, $\frac{1}{6}$, and $\frac{1}{6}$
respectively when $f (\od) = 0$.

Let $f$ be the query counting the number of individuals contracting
a disease. Consider a data set $\od$ whose two members (including John)
have contracted the disease. The number of individuals contracting the
disease is $2$ and hence $f (\od) = 2$.
From the table in Fig.~\ref{figure:geometric-mechanism-table},
we see the mechanism answers
$\tilde{0}$, $\tilde{1}$, and $\tilde{2}$ with probabilities
$\frac{1}{6}$, $\frac{1}{6}$, and $\frac{2}{3}$ respectively.
Suppose we obtain another data set $\od'$ by replacing John with an
individual who does not contract the disease. The number of
individuals contracting the disease for the new data set is $1$ and
thus $f (\od') = 1$. Then, the mechanism answers
$\tilde{0}$, $\tilde{1}$, and $\tilde{2}$ with the probability
$\frac{1}{3}$.

The probabilities of observing $\tilde{0}$ on the data sets
$\od$ and $\od'$ are respectively $\frac{1}{6}$ and
$\frac{1}{3}$. They differ by the multiplicative factor $2$. For
other outputs, their observation probabilities are also bounded by the
same factor. The truncated $\frac{1}{2}$-geometric mechanism is hence
$\ln (2)$-differentially private.

\hide{
If an attacker queries the number of individuals contracting the
disease through the truncated $\frac{1}{2}$-geometric mechanism,
he will get the answer ${2}$ with probability at least
$\frac{1}{3}$, and at most $\frac{2}{3}$ regardless of John's record.
For any two similar data sets, the mechanism always have similar
output distributions. Information revealed to attackers is therefore
limited.
}

In order to formally analyze privacy mechanisms, we specify them
as probabilistic models. Fig.~\ref{figure:geometric-mechanism-mc} shows
a Markov chain for the truncated $\frac{1}{2}$-geometric mechanism.
We straightly turn inputs and outputs of the table in Fig.~\ref{figure:geometric-mechanism-table} into states of the Markov chain
and output probabilities into transition probabilities.
In the figure, thin arrows denote transitions with probability
$\frac{1}{6}$; medium arrows denote transitions with probability
$\frac{1}{3}$; thick arrows denote transitions with probability
$\frac{2}{3}$. For instance, state $0$ can transit to state
$\tilde{0}$ with probability $\frac{2}{3}$ while it can transit to
the state $\tilde{1}$ with probability $\frac{1}{6}$.
\hfill$\blacksquare$
\end{example}
The Markov chain model is straightforward but can become hazy for complicated privacy mechanism. We next discuss how to use an HMM to model the mechanism.

\subsection{Pufferfish Privacy Using Hidden Markov Models}
\label{sub:PP}
We denote data sets as states
and possible outputs of the mechanism are denoted by
observations. The transition
distribution stimulates the randomized privacy mechanism performed
on data sets. 
Distributions of data
sets are denoted by initial information states. Privacy
analysis can then be performed by comparing observation probabilities
from the two initial information states. We illustrate the ideas
in examples.
\begin{example}
\label{ex:hmm1}
Fig.~\ref{figure:geometric-mechanism-hmm2} gives an HMM for
the truncated $\frac{1}{2}$-geometric mechanism. For any counting
query $f$ from
$\calX^2$ to $\{ 0, 1, 2 \}$, it suffices to represent each $\od \in
\calX^2$ by $f (\od)$ because the mechanism only depends on $f (\od)$.
The order of entries, for instance, is irrelevant to the mechanism.
We hence have the states $0$, $1$ and $2$ denoting the set $\{ f (\od)
: \od \in \calX^2 \}$ in the figure.
Let $\{\tilde{0}, \tilde{1}, \tilde{2} \}$ be the set
of observations. We encode output probabilities into observation probabilities at states.
At state $0$, for instance, $\tilde{0}$, $\tilde{1}$, $\tilde{2}$ can all be observed
with probability $\frac{2}{3}$, $\frac{1}{6}$, $\frac{1}{6}$ respectively.
It is obvious that the number of states are reduced by half  compared with the Markov chain. Generally, HMMs allow multiple observations to show at one single state,
which leads to smaller models.

Fix an order for states,
say, $0, 1, 2$. An information state
can be represented by an element in $[0, 1]^3$.
In differential privacy,
we would like to analyze probabilities of every observation from
neighboring data sets. For counting queries,
neighboring data sets can change query results by at most $1$. Let
$\od$ be a data set. Consider the initial information state
$\pi = (0, 0, 1)$ corresponding to $f (\od) = 2$. For
any neighbor $\od'$ of $\od$, we have $f (\od') = 2$ or $f (\od') =
1$. It  suffices to consider corresponding information states
$\pi$ or $\tau = (0, 1, 0)$. Let's compare
the probability of  observing $\omega = \tilde{1}$ from
information states $\pi$ and $\tau$. Starting from $\pi$, we have
$\alpha_0 = \pi$ and probabilities of $\frac{1}{6}$, $\frac{1}{3}$ and $\frac{1}{6}$
respectively observing $\tilde{1}$ at each state. So the probability of observing
$\omega$ is $\frac{1}{6}$.
On the other hand, we have $\alpha_0 = \tau$ and the probability of observing
$\omega$ is $\frac{1}{3}$.
Similarly, one can easily check the probabilities of
observing $\tilde{0}$ and $\tilde{2}$ on any neighboring
data sets and the ratio of one probability over the other one under the same observation
will not be more than $2$.
$\hfill\blacksquare$
\end{example}

\hide{
Figure~\ref{figure:geometric-mechanism-hmm2} shows a smaller hidden Markov model for
the mechanism. The intuition of this model is quite easy to see after we analyze the above models.
There are only a half of states without transitions in this model and from
each state, one of the observations $\{\tilde{0}, \tilde{1}, \tilde{2}\}$can be obtained with different
observing probabilities. This is a simple example indicating that HMMs may result in models of small size.
}


Differential privacy provides a framework for quantitative privacy analysis.
The framework ensures similar output distributions regardless of the information about an arbitrary individual.
In other words, if an attacker gets certain prior knowledge about the data sets,
chances are that differential privacy will underestimate privacy risks. Since all data entries are correlated, replacing one data
entry does not yield feasible data sets with correlated entries.
Consequently, it is questionable to compare
output distributions on data sets differing in only one entry. Instead, this is the scenario where Pufferfish privacy should be applied.

\vspace{-5pt}
\begin{example}
\label{ex:hmm2}
Consider a data set about contracting a highly contagious disease 
containing John and a
family member he lives with. An attacker wishes to know if John has contracted the
disease. Since the data set keeps information on the contagious disease about two
family members, an attacker immediately deduces that the number of
individuals contracting the disease can only be $0$ or $2$. The
attacker hence can infer whether John has the disease by counting the
number of individuals contracting the disease.

Suppose a data curator tries to protect John's privacy by employing
the truncated $\frac{1}{2}$-geometric mechanism
(Fig.~\ref{figure:geometric-mechanism}).
We analyze this mechanism formally in the Pufferfish framework. Let the set of data entries $\calX = \{ 0, 1
\}$ and there are four possible data sets in $\calX^2$. For any $0 < p < 1$,
define the data set distribution $\theta_p : \calX^2 \rightarrow [0,
1]$ as follows. $\theta_p  (0, 0) = 1 - p$, $\theta_p (1, 1) = p$, and
$\theta_p (0, 1) = \theta_p (1, 0) = 0$.
Consider the distribution set $\bbfD = \{ \theta_p : 0 < p < 1 \}$. Note
that infeasible data sets are not in the support of $\theta_p$.

Assume John's entry is in the data set. Define the set
of secrets $\bbfS = \{ c, nc \}$ where $c$ denotes that John has
contracted the disease and $nc$ denotes otherwise. Our set of
discriminative secret pairs $\bbfS_{\textmd{pairs}}$ is $\{ (c, nc),
(nc, c) \}$. That is, we would like to compare probabilities of
all outcomes when John has the disease or not.

When John has not contracted the disease, the only possible data set
is $(0, 0)$ by the distribution $\theta_p$.
The probability of observing $\tilde{0}$ therefore is $\frac{2}{3}$
(Fig.~\ref{figure:geometric-mechanism-table}). When John has the
disease, the data set $(0, 0)$ is not possible under the condition of
the secret and the distribution $\theta_p$.
The only possible data set is $(1, 1)$.
The probability of observing $\tilde{0}$ is $\frac{1}{6}$.
Now we have
$
  \frac{2}{3} = \Pr (\calG_f (\oD) = \tilde{0} | nc, \theta_p)
  \not\leq
  2 \times \frac{1}{6} =
  2 \times \Pr (\calG_f (\oD) = \tilde{0} | c, \theta_p).
$
We conclude the truncated $\frac{1}{2}$-geometric mechanism does not
conform to $\ln (2)$-Pufferfish privacy. Instead, it satisfies $\ln (4)$-Pufferfish privacy.
$\hfill\blacksquare$
\end{example}


\begin{wraptable}[7]{r}{8cm}
\vspace{-20pt}
  \caption{Pufferfish Analysis of $\frac{1}{2}$-Geometric Mechanism}
  \label{table:Pufferfish-geometric-mechanism}
  \centering
  \renewcommand\arraystretch{1.4}
\vspace{-10pt}
  \[
    \begin{array}{|c|c|c|c|}
    \hline
      \textmd{Data Sets$\backslash$Observations}& \tilde{0} & \tilde{1} & \tilde{2} \\
      \hline
      \textmd{without John's record}
      & \frac{p^2 - 4p + 4}{6}
      & \frac{-2p^2 + 2p + 1}{6}
      & \frac{p^2 + 2p + 1}{6}
      \\
      \hline
      \textmd{with John's record}
      & \frac{4-3p}{12-6p}
      & \frac{4-3p}{12-6p}
      & \frac{2}{6-3p}
      \\
      \hline
    \end{array}
  \]
\end{wraptable}
\vspace{-5pt}

With the formal
model (Fig.~\ref{figure:geometric-mechanism-hmm2}), it is easy to
perform privacy analysis in the Pufferfish framework. More precisely,
the underlying Markov chain along with observation distribution specify the privacy mechanism on
input data sets. Prior knowledge about data sets is nothing but
distributions of them. Since data sets are represented by various
states, prior knowledge is naturally formalized as initial information
states in HMMs. For Pufferfish privacy analysis, we
again compare observation probabilities from initial information
states conditioned on secret pairs. The standard algorithm for HMMs allows us to perform more refined privacy analysis.
Besides, it is interesting to observe the striking similarity between
the Pufferfish privacy framework and HMMs. In both
cases, input data sets are unknown but specified by
distributions. Information can only be released by observations
because inputs and hence computation are hidden from external
attackers or observers. Pufferfish privacy analysis with prior
knowledge is hence closely related to observation probability analysis
from information states. Such similarities can easily be identified in
the examples.

\hide{
\noindent
\emph{Example 4.5}
For a data set $\oD \in \calX^2$ drawn by the distribution
$\theta_p$, we have $f (\oD) = 0$ with probability $1 - p$ and $f
(\oD) = 2$ with probability $p$. The data set distribution $\theta_p$
thus corresponds to the information state $(1 - p, 0, p, 0, 0, 0)$.
Given the information state, let us compute the probability of
observing $\blank\tilde{0}$ in two scenarios. If John has not
contracted the disease, we start from the initial information state
$\pi = (\frac{1 - p}{1 - p}, 0, 0, 0, 0, 0) = (1, 0, 0, 0, 0, 0)$ by
$\theta_p$. Using (\ref{hmm:inductive}), we have $\alpha_0 = \pi$ and
$\alpha_1 = (0, 0, 0, \frac{2}{3}, 0, 0)$. Similarly, we start from
the initial information state $\tau = (0, 0, \frac{p}{p}, 0, 0, 0) =
(0, 0, 1, 0, 0, 0)$ when John has the disease. We have $\alpha_0 =
\tau$ and $\alpha_1 = (0, 0, 0, \frac{1}{6}, 0, 0)$ in
(\ref{hmm:inductive}). The probabilities of observing
$\blank\tilde{0}$ from
$\pi$ and $\tau$ are hence $\frac{2}{3}$ and $\frac{1}{6}$
respectively. Using the formal model, we reach the same conclusion as
in privacy analysis.
$\hfill\blacksquare$

When an attacker has prior knowledge about data sets, we have seen why
differential privacy may underestimate information leak. Prior
knowledge however does not necessarily  lead to privacy leak.
In the following example, we demonstrate that certain prior knowledge
is indeed not harmful to privacy under the Pufferfish framework using
our formal model.}

\begin{example}
\label{ex:hmm4}
Consider a non-contagious disease. An attacker may know that
contracting the disease is an independent event with probability
$p$. Even though the attacker does not know how many individuals have
the disease exactly, he infers that the number of individuals
contracting the disease is $0$, $1$, and $2$ with probabilities
$(1-p)^2$, $2p(1-p)$, and $p^2$ respectively.
The prior knowledge corresponds to the initial information state
$\pi = ((1-p)^2, 2p(1-p), p^2)$ in
Fig.~\ref{figure:geometric-mechanism-hmm2}.
Assume John has contracted the disease. We would
like to compare probabilities of observations $\tilde{0}$,
$\tilde{1}$, and $\tilde{2}$ given the prior knowledge and
the presence or absence of John's record.

Suppose John's record is indeed in the data set. Since John has
the disease, the number of individuals contracting the disease cannot
be $0$. By the prior knowledge, one can easily obtain the initial information state
$\pi = (0$, $\frac{2p(1-p)}{2p(1-p) + p^2}$,
$\frac{p^2}{2p(1-p) + p^2}$) = ($0$, $\frac{2-2p}{2-p}$,
$\frac{p}{2-p})$. If John's record is not in the data set, the initial information state remains as
$\tau = ((1-p)^2, 2p(1-p), p^2)$. Then one can compute all the observation probabilities starting from $\pi$ and $\tau$ respectively, which are summarized in 
Table~\ref{table:Pufferfish-geometric-mechanism}:

For the observation $\tilde{0}$, it is not hard to check $\frac{1}{2}
\times \frac{4-3p}{12-6p} \leq \frac{p^2 - 4p + 4}{6} \leq 2 \times
\frac{4-3p}{12-6p}$ for any $0 < p < 1$. Similarly, we have
$\frac{1}{2} \times
\frac{4-3p}{12-6p} \leq \frac{-2p^2 + 2p + 1}{6} \leq 2 \times
\frac{4-3p}{12-6p}$ and $\frac{1}{2} \times \frac{2}{6-3p} \leq
\frac{p^2 + 2p + 1}{6} \leq 2 \times \frac{2}{6-3p}$  for observations
$\tilde{1}$ and $\tilde{2}$ respectively. Therefore, the
truncated $\frac{1}{2}$-geometric mechanism satisfies $\ln(2)$-Pufferfish
privacy when contracting the disease is \emph{independent}.
$\hfill\blacksquare$
\end{example}
\vspace{-5pt}
The above example demonstrates that certain prior knowledge, such as independence
of data entries,
is indeed not harmful to privacy under the Pufferfish framework.
In~\cite{KM:14:PFMPD}, it is shown that differential privacy is
subsumed by Pufferfish privacy (Theorem~6.1) under independence assumptions. The above example is also an
instance of the general theorem but formalized in an HMM.
\hide{
From the above example, it may appear that independence of entries in
data sets is necessary for Pufferfish privacy. Further formal analysis
shows that it is not the case.

\noindent
\textit{Example 4.7}
Let $0 \leq p_0, p_1, p_2 \leq 1$ with $p_0 + p_1 + p_2 = 1$. Consider
the initial information state $\pi = (p_0, p_1, p_2, 0, 0, 0)$ in
Figure~\ref{figure:geometric-mechanism-mc}.
When John has contracted the disease, it corresponds to the initial
information state $\tau_0 = (0, \frac{p_1}{p_1 + p_2}, \frac{p_2}{p_1 +
  p_2}, 0, 0, 0)$ since $f (\oD) = 0$ is impossible. Using
(\ref{hmm:inductive}), the probabilities
of observing $\blank\tilde{0}$, $\blank\tilde{1}$, and
$\blank\tilde{2}$ from $\tau_0$ are $\frac{2p_1 + p_2}{6(p_1 + p_2)}$,
$\frac{2p_1 + p_2}{6(p_1 + p_2)}$, and $\frac{p_1 + 2p_2}{3(p_1 + p_2)}$
respectively.

When John has not contracted the disease, we consider the initial
information state $\tau_1 = (\frac{p_0}{p_0 + p_1}, \frac{p_1}{p_0 +
  p_1}, 0, 0, 0, 0)$ since $f (\oD) = 2$ is impossible. The
probabilities of observing $\blank\tilde{0}$, $\blank\tilde{1}$, and
$\blank\tilde{2}$ from $\tau_1$ are $\frac{p_0 + 2p_1}{6(p_0 + p_1)}$,
$\frac{p_0 + 2p_1}{6(p_0 + p_1)}$, and $\frac{2p_0 + p_1}{3(p_0 + p_1)}$
respectively by (\ref{hmm:inductive}).

Choose $p_0 = \frac{1}{2}$ and $p_1 = p_2 = \frac{1}{4}$.
From $\tau_0$, the
probabilities of observing $\blank\tilde{0}$, $\blank\tilde{1}$, and
$\blank\tilde{2}$ are $\frac{1}{4}$, $\frac{1}{4}$, and $\frac{1}{2}$
respectively. Similarly, the probabilities of observing
$\blank\tilde{0}$, $\blank\tilde{1}$, and $\blank\tilde{2}$ from
$\tau_1$ are $\frac{2}{9}$, $\frac{2}{9}$, and $\frac{5}{9}$
respectively. For each observation, the probabilities from $\tau_0$ and
$\tau_1$ are bounded by the multiplicative factor $2$.
In other words, the truncated
$\frac{1}{2}$-geometric mechanism satisfies $\ln(2)$-Pufferfish
privacy for the data set distribution $\theta$ such that $f (\oD) =
0$, $f (\oD) = 1$, and $f (\oD) = 2$ occur with the probabilities
$\frac{1}{2}$, $\frac{1}{4}$, and $\frac{1}{4}$ respectively when
$\oD$ is drawn by $\theta$.

To see the dependency between data entries, suppose the probability of
an individual contracting the disease is $0 \leq p \leq 1$. If disease
contraction is independent for each individual, the probabilities of
$0$, $1$, and $2$ individuals contracting the disease is therefore
$(1-p)^2$, $2p(1-p)$, and $p^2$ respectively. Recall that $(p_0, p_1,
p_2) = (\frac{1}{2}, \frac{1}{4}, \frac{1}{4})$. Hence
$(1-p)^2 = \frac{1}{2}$, $2p(1-p) = \frac{1}{4}$, and $p^2 =
\frac{1}{4}$. We have $\frac{1}{2} + \frac{1}{4} = (1-p)^2 + 2p(1-p) =
1-2p + p^2 + 2p - p^2 = 1$. This is absurd. Hence the data set
distribution $(p_0, p_1, p_2) = (\frac{1}{2}, \frac{1}{4},
\frac{1}{4})$ is not obtained by
independence of data entries. Even though the prior distribution is
not derived by independent contraction of the disease, the truncated
$\frac{1}{2}$-geometric mechanism does not reveal too much
information. Independence of contraction is not necessary for
Pufferfish privacy on the privacy mechanism.
$\hfill\blacksquare$
}

\section{Pufferfish Privacy Verification}
\label{section:verification}
In this section, we formally define the verification problem for Pufferfish privacy and give
the computation complexity results in Section~\ref{subsection:complexity}.
Then we propose an algorithm to solve the problem in Section~\ref{subsection:checking-pufferfish}.

\subsection{Complexity of Pufferfish Privacy Problem}
\label{subsection:complexity}

We model the general Pufferfish privacy problems into HMMs and
the goal is to check whether the privacy is preserved. First, we define the \emph{Pufferfish verification problem}:

\begin{definition}
Given a set of secrets $\bbfS$,
a set of discriminative secret pairs $\bbfS_{\textmd{pairs}}$, a set of data evolution
scenarios $\bbfD$ , $\epsilon > 0$, along with mechanism  $\calM$ in a hidden Markov model $H = (K, \Obs, o)$,
where probability distributions are all discrete.
Deciding whether $\calM$  satisfies $\epsilon$-Pufferfish privacy under $(\bbfS$, $\bbfS_{\textmd{pairs}}$,
$\bbfD)$ is the \emph{Pufferfish verification problem}.
\end{definition}

The modeling intuition for $H$ is to use states and transitions to model the data sets and operations in the mechanism $\calM$,
obtain initial distribution pairs according to prior knowledge $\bbfD$ and discriminative secrets $\bbfS_{\textmd{pairs}}$,
and set outputs as observations in states. Then the goal turns into checking whether the probabilities under the same observation sequence
are mathematically similar, i.e., differ by at most the multiplicative factor $e^{\epsilon}$, for every distribution pair and every observation sequence.
Therefore, our task is to find the observation sequence and distribution pair that make the observing probabilities differ the most.
That is, in order to satisfy Pufferfish privacy, for every observation sequence $\overline{\omega}=\omega_1\omega_2\ldots$, secret pair $(s_i, s_j) \in
\bbfS_{\textmd{pairs}}$ and $\theta \in \bbfD$, one should have
    \begin{equation}\label{max1}
     \max_{\overline{\omega},(s_i, s_j) ,\theta}
    { \Pr (\calM (\oD) = \overline{\omega}| s_i, \theta) - e^\epsilon \Pr (\calM (\oD) = \overline{\omega}| s_j, \theta) }
  \end{equation}
  \begin{equation}\label{max2}
     \max_{\overline{\omega},(s_i, s_j),\theta}
     { \Pr (\calM (\oD) = \overline{\omega}| s_j, \theta) - e^\epsilon \Pr (\calM (\oD) = \overline{\omega}| s_i, \theta) }
  \end{equation}
no more than $0$. However, by
showing a reduction from the classic Boolean Satisfiability Problem~\cite{PCT:87:CMDP},
this problem is proved to be NP-hard (in Appendix 1 ):

\begin{theorem}\label{theorem1}
  The Pufferfish verification problem is NP-hard.
\end{theorem}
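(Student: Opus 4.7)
The plan is to establish NP-hardness via a polynomial-time many-one reduction from 3-SAT to the complement of the Pufferfish verification problem. The existential form of a privacy violation---``there exist an observation sequence $\overline{\omega}$, a discriminative secret pair $(s_i, s_j) \in \bbfS_{\textmd{pairs}}$, and a distribution $\theta \in \bbfD$ for which the bracket of (\ref{max1}) or (\ref{max2}) is positive''---aligns directly with the existence of a satisfying assignment. Because the forward recursion (\ref{hmm:basis})--(\ref{hmm:inductive}) evaluates $\Pr(\calM(\oD) = \overline{\omega} \mid s_i, \theta)$ in polynomial time for any \emph{fixed} $\overline{\omega}$, the hardness must reside in the quantification over observation sequences; I will therefore encode the SAT witness into the observation sequence.

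Given a 3-CNF formula $\phi(x_1, \dots, x_n) = C_1 \wedge \cdots \wedge C_m$, I would build a hidden Markov model $H_\phi$ of size polynomial in $n + m$ whose state space is a layered chain: a ``variable'' layer for each $x_i$ followed by an ``acceptance'' gadget using $O(m)$ states. Each variable layer offers two outgoing transitions whose emitted observations are $0$ and $1$ respectively, so that a length-$n$ observation prefix $\overline{\omega} = \omega_1 \cdots \omega_n$ pins down a unique traversal and hence a unique assignment $\alpha_{\overline{\omega}} \in \{0,1\}^n$. The acceptance gadget is rigged so that the final state emits a distinguished observation $\top$ with probability $1$ when $\alpha_{\overline{\omega}}$ satisfies $\phi$, and with probability $1/2$ otherwise.

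To embed this HMM into a Pufferfish instance I would let $\calX = \{0, 1\}$ with data sets of length $1$, take $\theta \in \bbfD$ uniform on $\{0, 1\}$, and introduce secrets $s_0 = $ ``$\oD = 0$'' and $s_1 = $ ``$\oD = 1$'' with $\bbfS_{\textmd{pairs}} = \{(s_0, s_1), (s_1, s_0)\}$. The mechanism's initial information state is determined by the data: conditioning on $s_0$ places all mass on the root of an ``oblivious'' parallel copy of $H_\phi$ that emits $\top$ with probability $1/2$ irrespective of the traced assignment, while conditioning on $s_1$ places all mass on the root of the checking copy above. A direct calculation using (\ref{eqn:prob-obs-state}) yields
\[
\Pr(\calM = \overline{\omega}\,\top \mid s_0, \theta) = (1/2)^{\,n+1}, \qquad
\Pr(\calM = \overline{\omega}\,\top \mid s_1, \theta) = (1/2)^{n} \cdot q_{\overline{\omega}},
\]
where $q_{\overline{\omega}} \in \{1/2, 1\}$ records whether $\alpha_{\overline{\omega}}$ satisfies $\phi$. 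Choosing any rational $\epsilon$ with $0 < \epsilon < \ln 2$ (for instance $\epsilon = 1/2$), the multiplicative bound $e^\epsilon$ is exceeded for some $\overline{\omega}\,\top$ precisely when $\phi$ is satisfiable, giving the reduction.

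The main technical obstacle I anticipate is ruling out spurious violations on unsatisfiable $\phi$: I must verify that no alternative observation sequence---of different length, or bypassing the distinguished $\top$ symbol---drives the bracket of (\ref{max1}) or (\ref{max2}) above zero when no satisfying assignment exists. This will require examining the sum in (\ref{eqn:prob-obs-state}) over the state sequences compatible with each such alternative observation and arranging the oblivious copy to dominate the checking copy on every one of them. A secondary concern is to keep all transition and emission probabilities as rationals of polynomial bit-length, which should be straightforward by using only halves in the variable-layer coin flips and realising the clause checks by a constant-fanin OR-gadget of logarithmic depth.
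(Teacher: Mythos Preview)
Your high-level strategy matches the paper's: both reduce from SAT and encode the truth assignment in the observation sequence, so that the hardness is carried by the existential over $\overline{\omega}$. But there is a real gap at the acceptance gadget. You stipulate that after the $n$ variable layers the gadget ``emits $\top$ with probability $1$ when $\alpha_{\overline{\omega}}$ satisfies $\phi$, and with probability $1/2$ otherwise.'' In an HMM the emission distribution depends only on the \emph{current state}, not on the history; so for the gadget to make this distinction, the state reached after the variable layers must already encode whether $\phi(\alpha_{\overline{\omega}})$ holds. With a single layered chain---each variable layer branching on $0/1$ and then merging into the next---the state at the end carries at most the value of $x_n$, not of $x_1,\ldots,x_{n-1}$. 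A downstream ``constant-fanin OR-gadget of logarithmic depth'' does not help, because once the variable layers are past the gadget has no access to the assignment; it would need to track which subset of clauses has been satisfied, which is $2^{\Omega(m)}$ states. As written, your polynomial-size HMM cannot realise the behaviour you need for the probability calculation $(1/2)^n\cdot q_{\overline{\omega}}$.

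The paper sidesteps this with a different decomposition: it never tries to compute $\phi(\alpha_{\overline{\omega}})$ in one chain. Instead it builds $m$ disjoint per-clause chains of $O(n)$ states each and places the initial distribution \emph{uniformly} over their start states. While reading $\omega_1\cdots\omega_n$, chain $i$ hops from an ``unsatisfied'' track to a ``satisfied'' track the moment $C_i$ becomes true; each chain remembers only one bit. The terminal observation probability is then an \emph{average} over clauses, and the emission probabilities at the two terminal states (paired with a second, symmetrically perturbed copy playing the role of your oblivious track) are tuned so that a satisfied clause contributes exactly zero to the bracket of (\ref{max1}) while an unsatisfied clause contributes strictly negatively. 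This per-clause parallelism is the missing ingredient; with it the construction has $O(nm)$ states and the reduction goes through, but the single-chain plan you describe does not.
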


To the best of our knowledge, this is the first complexity result for the
Pufferfish verification problem. Note that differential privacy is subsumed
by Pufferfish privacy. Barthe et al.~\cite{BCJSV:20:ddppfio} show undecidability
results for differential privacy mechanisms with continuous noise. Instead, we focus on
Pufferfish privacy with discrete state space in HMMs.
The complexity bound is lower if
more simple models such as Markov chains are used.
However some discrete mechanisms in differential privacy, such as Above Threshold,
can hardly be modeled in Markov chains~\cite{LWZ:18:MCDPP}.  

\subsection{Verifying Pufferfish Privacy}
\label{subsection:checking-pufferfish}
Given the complexity lower bound in the previous section,
next goal is to develop an algorithm to verify
$\epsilon$-Pufferfish privacy on any given HMM.
We employ Satisfiability Modulo Theories (SMT) solvers
in our algorithm. For all observation sequences of length $k$, we will
construct an SMT query to find a sequence violating
$\epsilon$-Pufferfish privacy. If no such sequence can be found, the
given HMM satisfies $\epsilon$-Pufferfish privacy for all observation
sequences of length $k$.

\begin{algorithm}
  \begin{algorithmic}[1]
    \Require{$H = ((S, p), \Obs, o)$: a hidden Markov model;
      $\pi, \tau$: state distributions on $S$; $c$: a
      non-negative real number; $k$: a positive integer}
    \Ensure{An SMT query $q$ such that $q$ is unsatisfiable iff
      $\Pr (\overline{\omega} | \pi, H) \leq c \cdot
       \Pr (\overline{\omega} | \tau, H)$ for every observation
       sequences $\overline{\omega}$ of length $k$}
     \Function{PufferfishCheck}{$H$, $\pi_0$, $\pi_1$, $c$, $k$}
       \For{$s \in S$}
         \State{$\alpha_0(s) \leftarrow
           \textmd{\textsc{Product}} (\pi(s),
           \textmd{\textsc{Select}}(\mathsf{w_0}, \Omega, o(s, \bullet)))$}
         \State{$\beta_0(s) \leftarrow
           \textmd{\textsc{Product}} (\tau(s),
           \textmd{\textsc{Select}}(\mathsf{w_0}, \Omega, o(s, \bullet)))$}
       \EndFor
       \For{$t \leftarrow 1$ \textbf{to} $k - 1$}
         \For{$s' \in S$}
           \State{$\alpha_t(s') \leftarrow
             \textmd{\textsc{Product}} (
             \textmd{\textsc{Dot}} (\alpha_{t-1}, p (\bullet, s')),
             \newline\indent\indent\textmd{\textsc{Select}} (\mathsf{w_t}, \Omega, o(s', \bullet)))
             $}
           \State{$\beta_t(s') \leftarrow
             \textmd{\textsc{Product}} (
             \textmd{\textsc{Dot}} (\beta_{t-1}, p (\bullet, s')),
             \newline\indent\indent\textmd{\textsc{Select}} (\mathsf{w_t}, \Omega, o(s', \bullet)))
             $}
         \EndFor
       \EndFor
       \State{\Return{
           $\textmd{\textsc{Gt}} (
           \textmd{\textsc{Sum}} (\alpha_{k-1}),
           \textmd{\textsc{Product}} (c,
           \textmd{\textsc{Sum}} (\beta_{k-1})))\wedge
           \bigwedge_{t=0}^{k-1} \mathsf{w_t} \in \Obs
           $
       }}
     \EndFunction
 \end{algorithmic}
 \caption{Pufferfish Check}
 \label{algorithm:pufferfish-check}
\end{algorithm}

Let $H = ((S, p), \Obs, o)$ be an HMM, $\pi, \tau$ two initial 
distributions on $S$, $c \geq 0$ a real number, and $k$ a positive
integer. With a fixed observation sequence $\overline{\obs}$,
computing the probability $\Pr (\overline{\obs} | \pi,
H)$ 
can be done in
polynomial time~\cite{R:89:ATHMM}. To check if
$\Pr (\overline{\obs} | \pi, H) > c \cdot \Pr (\overline{\obs} |
\tau, H)$ for any fixed observation sequence $\overline{\obs}$, one
simply computes the respective probabilities and then checks the inequality.

Our algorithm exploits the efficient algorithm of HMMs for computing the
probability of observation sequences. Rather than a fixed observation
sequence, we declare $k$ SMT variables $\mathsf{w_0}, \mathsf{w_1},
\ldots, \mathsf{w_{k-1}}$ for observations at each step. The
observation at each step is determined by one of the $k$ variables.
Let $\Omega = \{ \omega_1$, $\omega_2$, $\ldots$, $\omega_m \}$ be the set
of observations. We define the SMT expression
$\textmd{\textsc{Select}}$ ($\mathsf{w}$, $\{ \obs_1$, $\obs_2$, $\ldots$,
$\obs_m\}$, $o (s, \bullet))$  equal to $o (s, \obs)$ when the SMT
variable $\mathsf{w}$ is $\obs \in \Omega$. It is straightforward to
formulate by the SMT $\mathsf{ite}$ (if-then-else) expression:
\begin{align*}
  \mathsf{ite} (\mathsf{w} = \obs_1, o (s, \obs_1),
  &\mathsf{ite} (\mathsf{w} = \obs_2, o (s, \obs_2),
  \ldots,
  \mathsf{ite} (\mathsf{w} = \obs_m, o (s, \obs_m), \mathsf{w})
  \ldots))
\end{align*}

Using $\textmd{\textsc{Select}} (\mathsf{w}, \{ \obs_1, \obs_2,
\ldots, \obs_m\}, o (s, \bullet))$, we construct an SMT
expression to compute $\Pr (\overline{\mathsf{w}} | \pi, H)$ where
$\overline{\mathsf{w}}$ is a sequence of SMT variables ranging over
the observations $\Obs$ (Algorithm~\ref{algorithm:pufferfish-check}).
Recall the equations (\ref{hmm:basis}) and (\ref{hmm:inductive}). We
simply replace the expression $o (s, \obs)$ with the new one
$\textmd{\textsc{Select}} $$(\mathsf{w}$$, \{ \obs_1,$ $\obs_2,$ $\ldots,$
$\obs_m\},$ $o$ $(s,$ $\bullet))$ to leave the observation determined by the
SMT variable $\mathsf{w}$. In the algorithm, we also use auxiliary
functions.
$\textmd{\textsc{Product}}(\mathit{smtExp}_0, 
\ldots, \mathit{smtExp}_m)$
returns the SMT expression denoting the product of
$\mathit{smtExp}_0, \ldots,$
$\mathit{smtExp}_m$. Similarly,
$\textmd{\textsc{Sum}}(\mathit{smtExp}_0,\ldots,$ 
$ \mathit{smtExp}_m)$ returns the SMT expression for the sum
of $\mathit{smtExp}_0, \ldots,$
$\mathit{smtExp}_m$. $\textmd{\textsc{Gt}} (\mathit{smtExp}_0,
\mathit{smtE}$ $\mathit{xp}_1)$ returns the SMT expression for
$\mathit{smtExp}_0$ greater than $\mathit{smtExp}_1$.
Finally,
$\textmd{\textsc{Dot}}$ $([\mathsf{a_0}$, $\mathsf{a_1}$, $\ldots,$
$\mathsf{a_n}], [\mathsf{b_0}, \mathsf{b_1}, \ldots, \mathsf{b_n}])$
returns the SMT expression for the inner product of the two lists of
SMT expressions, namely,
$
\textmd{\textsc{Sum}} (
  \textmd{\textsc{Product}}(\mathsf{a_0}, \mathsf{b_0}),
  \ldots,
  \textmd{\textsc{Product}}(\mathsf{a_n}, \mathsf{b_n})
  ).
$

Algorithm~\ref{algorithm:pufferfish-check} is summarized in the
following theorem.

\begin{theorem}
  Let $H = ((S, p), \Obs, o)$ be a hidden Markov model, $\pi, \tau$ state
  distributions on $S$, $c > 0$ a real number, and $k > 0$ an
  integer. Algorithm~\ref{algorithm:pufferfish-check} returns an SMT
  query such that the query is unsatisfiable iff
  $\Pr (\overline{\omega} | \pi, H) \leq c \cdot
  \Pr (\overline{\omega} | \tau, H)$ for every observation
  sequence $\overline{\omega}$ of length $k$.
\end{theorem}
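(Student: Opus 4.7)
The plan is to argue correctness of the encoding by a straightforward induction on the time index $t$, then connect the top-level SMT formula to the desired universal inequality via the standard identity $\Pr(\overline{\omega}\mid\pi,H)=\sum_{s\in S}\alpha_{k-1}(s)$ from Section~\ref{section:preliminaries}. The key invariant is that the SMT expressions constructed by the algorithm for $\alpha_t(s)$ and $\beta_t(s)$ are \emph{symbolic} counterparts of the forward quantities in equations~(\ref{hmm:basis}) and (\ref{hmm:inductive}), parameterised by the observation variables $\mathsf{w_0},\ldots,\mathsf{w_{k-1}}$.

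Concretely, I would first verify the semantics of the auxiliary expression $\textmd{\textsc{Select}}(\mathsf{w},\Omega,o(s,\bullet))$: under any assignment $\sigma$ with $\sigma(\mathsf{w})=\obs\in\Omega$, the nested $\mathsf{ite}$ evaluates to $o(s,\obs)$. This is immediate by induction on the length of the $\mathsf{ite}$ chain, and it is the reason the conjunct $\bigwedge_{t=0}^{k-1}\mathsf{w_t}\in\Obs$ is included in the returned query: without it, an assignment falling through to the default branch could spuriously satisfy the inequality.

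Next, I would prove by induction on $t\in\{0,\ldots,k-1\}$ the following claim: for every assignment $\sigma$ satisfying $\sigma(\mathsf{w_i})=\obs_i\in\Omega$ for all $i\le t$, the SMT expression denoted $\alpha_t(s)$ (resp.\ $\beta_t(s)$) evaluates under $\sigma$ to the standard forward probability at time $t$ for the prefix $\obs_0\obs_1\cdots\obs_t$ under initial distribution $\pi$ (resp.\ $\tau$). The base case $t=0$ follows directly from the $\textmd{\textsc{Select}}$ semantics together with line~3--4, which encode $\pi(s)\cdot o(s,\obs_0)$. For the inductive step, $\textmd{\textsc{Dot}}(\alpha_{t-1},p(\bullet,s'))$ symbolically evaluates, by the induction hypothesis and linearity of the inner product encoding, to $\sum_{s\in S}\alpha_{t-1}(s)\,p(s,s')$; multiplying by $\textmd{\textsc{Select}}(\mathsf{w_t},\Omega,o(s',\bullet))$ then yields exactly the right-hand side of (\ref{hmm:inductive}) at observation $\obs_t$. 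The same derivation applies verbatim to $\beta_t(s)$.

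Finally, I would close by relating the top-level \textsc{Gt} conjunct to the theorem's statement. A satisfying assignment $\sigma$ yields an observation sequence $\overline{\obs}\in\Obs^k$ such that $\sum_{s}\alpha_{k-1}(s)>c\cdot\sum_{s}\beta_{k-1}(s)$ after evaluation, which by the induction above and the identity $\Pr(\overline{\obs}\mid\pi,H)=\sum_{s\in S}\alpha_{k-1}(s)$ is exactly $\Pr(\overline{\obs}\mid\pi,H)>c\cdot\Pr(\overline{\obs}\mid\tau,H)$; conversely any such violating sequence induces a satisfying assignment by setting $\sigma(\mathsf{w_t})=\obs_t$. Hence the query is satisfiable iff some $\overline{\obs}$ of length $k$ violates the inequality, i.e.\ unsatisfiable iff it holds for all such $\overline{\obs}$. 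There is no real obstacle here beyond the bookkeeping of the $\mathsf{ite}$ semantics and the careful matching of symbolic with concrete forward values; the whole argument is essentially a symbolic re-execution of the forward algorithm.
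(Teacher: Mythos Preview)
Your proposal is correct, and in fact it supplies considerably more than the paper does: the paper states this theorem without proof, offering only the preceding description of how the algorithm mirrors the forward recurrences (\ref{hmm:basis}) and (\ref{hmm:inductive}) with $o(s,\obs)$ replaced by $\textmd{\textsc{Select}}(\mathsf{w},\Omega,o(s,\bullet))$. Your induction on $t$ establishing that the symbolic $\alpha_t(s)$ and $\beta_t(s)$ evaluate, under any assignment with $\sigma(\mathsf{w_i})\in\Omega$, to the genuine forward quantities, together with the final biconditional between satisfying assignments and violating observation sequences, is exactly the natural way to discharge the claim; there is nothing to compare against on the paper's side.
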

\vspace{-5pt}

In practice, the integer $k$ depends
on the length of observation sequence we want to make sure to satisfy Pufferfish privacy.
For instance, in the model of Fig.~\ref{figure:geometric-mechanism-hmm2}, the maximal length
of observation sequence is 1 and thus $k=1$. If there exist
cycles in models such as Fig.~\ref{figure:hmm-above-threshold}, which implies loops in the mechanisms,
$k$ should keep increasing (and stop before a set value) in order to examine outputs of different lengths.

\hide{The factors that determine the speed of our algorithm are mainly the size
of model $H$, the length $k$ of observation sequence and the computing
power of SMT solver. We have implemented our algorithm using \zpython. All examples in
Section~\ref{section:hmm} are checked with our implementation. In
\textit{Example~4.5}, contracting the disease is independent with
probability $p > 0$. Even though the query contains non-linear
constraints over real numbers, \zpython still proves that the truncated
$\frac{1}{2}$-geometric mechanism satisfies $\ln(2)$-Pufferfish
privacy for any probability $0 < p < 1$.}

\section{Pufferfish Privacy Verifier: FAIER}
\label{section:case-studies}

We implement our verification tool and present experimental results in Subsection~\ref{subsection:evaluation}.
For the well-known differential privacy
mechanisms  Noisy Max and Above Threshold,
we provide modeling details in HMMs and verify the privacy
wrt. several Pufferfish privacy scenarios in Subsection~\ref{subsection:noisymax} and \ref{subsection:above-threshold}, accordingly.
\vspace{-5pt}
\subsection{Evaluation for FAIER}
\label{subsection:evaluation}
We implement our verification algorithm (Algorithm~\ref{algorithm:pufferfish-check}) into the tool FAIER,
which is the pufferFish privAcy verifIER.
It is implemented in \textbf{C++} environment with the SMT solver $Z3$~\cite{MB:08:z3} and
we performed all experiments on an Intel(R) Core i7-8750H
@ 2.20GHz CPU machine with 4 GB memory and 4 cores in the virtual machine. 
All the examples in this paper have been verified.

The inputs for our tool include an HMM $H$ of the mechanism to be verified,
distribution pair ($\pi$,$\tau$) on states in $H$,
a non-negative real number $c$ indicating
the privacy budget and an input $k$ specifying the length of observation
sequences. Note that unknown parameters are also allowed in the SMT formulae, which can encode certain prior knowledge or data sets distributions.

\vspace{-20pt}
\begin{table*}[h]
\caption{Experiment results: \goodmark~indicates the property holds, and \badmark~not.  }
\label{table:experiment-results}
\begin{tabular}{m{3cm}<{\centering}|m{3cm}<{\centering}|m{2cm}<{\centering}|m{3.6cm}<{\centering}}
     \multirow{2}{*}{Mechanism} & \multirow{2}{*}{Privacy scenario} & \multicolumn{2}{c}{Result}  \\ \cline{3-4}
     & & Query answer & Counterexample  \\
      \hline
     \multirowcell{4}{Truncated \\$\frac{1}{2}$ -geometric \\Mechanism} & $\ln (2)$-differential privacy (Ex.~\ref{ex:hmm1}) & \goodmark & \\ \cline{2-4}
     & $\ln (2)$-pufferfish privacy (Ex.~\ref{ex:hmm2})& \badmark &$\tilde{2}$\\ \cline{2-4}
     & $\ln (2)$-pufferfish privacy (Ex.~\ref{ex:hmm4})& \goodmark &\\\cline{2-4}
     \hline
     \multirowcell{2}{Discrete \\Noisy Max \\(Algorithm~\ref{algorithm:noisy-max})} &$\ln (2)$-pufferfish privacy (Ex.~\ref{ex:nm1}) & \goodmark  &\\\cline{2-4}
     & $\ln (2)$-pufferfish privacy (Ex.~\ref{ex:nm2})& \badmark &$\bot,\tilde{3}$; $p_A=p_B=p_C= \frac{1}{2}$\\
     \hline
     Above Threshold Algorithm (Algorithm~\ref{algorithm:above-threshold})& $4\ln (2)$-differential privacy & \badmark &
     $\blank, 01, \bot, 12, \bot, 12, \bot$, $12$, $\bot$, $21, \top$\\
\end{tabular}
\end{table*}
\vspace{-10pt}

We summarize the experiment results in this paper for pufferfish privacy, as well as differential privacy in Table~\ref{table:experiment-results}. FAIER has the following outputs:
\begin{itemize}
\item \emph{Counterexample:} If the privacy condition does not hold (marked by \badmark),
FAIER will return a witnessing observation sequence leading to the violation.
\item \emph{Parameter Synthesis:}
If there exist unknown parameters in the model, such as the infection rate $p$ for some disease, a value will be synthesized for the counterexample. See Ex.~\ref{ex:nm2} where counterexample is found when $p_A,p_B,p_C$ are equal to $\frac{1}{2}$; Or, no value can be found if the privacy is always preserved. See Ex.~\ref{ex:nm1}.
\item \emph{\goodmark} is returned if the privacy is preserved.
\end{itemize}
Note that if there exists a loop in the model, the bound $k$ should continue to increase when an 'UNSAT' is returned. Specially, the bound is set at a maximum of $15$ for Above Threshold.
It may happen that FAIER does not terminate since some nonlinear constraints are too complicated for $Z3$, such as Ex.~\ref{ex:nm1}, which cannot solved by $Z3$ within $60$ min. Thus we encode them into a more powerful tool REDLOG for nonlinear constraints~\cite{DS:97:redlog}.
For every experiment in the table, the time to construct the HMM model and SMT queries is less than 1 second;
the time for solving SMT queries are less than 2 seconds, except for Ex.~\ref{ex:nm1}.

Among the mechanisms in Table~\ref{table:experiment-results}, Algorithm~\ref{algorithm:noisy-max},\ref{algorithm:above-threshold}
need our further investigation. We examine these algorithms carefully in the following subsections.


\vspace{-5pt}
\subsection{Noisy Max}
\label{subsection:noisymax}

Noisy Max is a simple yet useful data publishing mechanism in
differential privacy~\cite{DR:14:AFDP,DWWZK:18:DVDP}. Consider $n$
queries of the same range, say, the number of patients for $n$
different diseases in a hospital. We are interested in knowing which
of the $n$ diseases has the maximal number of patients in the hospital.
A simple privacy-respecting way to release the information is to add
independent noises to every query result and then return the index of
the maximal noisy results.

\vspace{-10pt}
\begin{small}
\begin{algorithm}
  \begin{algorithmic}[1]
    \Require{$0 \leq v_1, v_2, \ldots, v_n \leq 2$}
    \Ensure{The index $r$ with the maximal $\tilde{v}_r$ among $\tilde{v}_1, \tilde{v}_2, \ldots, \tilde{v}_n$}
    \Function{DiscreteNoisyMax}{$v_1, v_2, \ldots, v_n$}
      \State{$M, r, c \leftarrow -1, 0, 0$}
      \For{each $v_i$}
        \Match{$v_i$}
        \Comment{apply $\frac{1}{2}$-geometric mechanism}
          \lCase{$0$}
                {$\tilde{v}_i \leftarrow 0, 1, 2$ with probability
                 $\frac{2}{3}, \frac{1}{6}, \frac{1}{6}$}
          \lCase{$1$}
                {$\tilde{v}_i \leftarrow 0, 1, 2$ with probability
                 $\frac{1}{3}, \frac{1}{3}, \frac{1}{3}$}
          \lCase{$2$}
                {$\tilde{v}_i \leftarrow 0, 1, 2$ with probability
                 $\frac{1}{6}, \frac{1}{6}, \frac{2}{3}$}
        \EndMatch
        \If{$M = \tilde{v}_i$}
        \State{$c \leftarrow c + 1$}
        \State{$r \leftarrow i$ with probability $\frac{1}{c}$}
        \EndIf
        \If{$M < \tilde{v}_i$}
        \State{$M, r, c \leftarrow \tilde{v}_i, i, 1$}
        \EndIf
      \EndFor
      \State{\Return {$r$} }
    \EndFunction
  \end{algorithmic}
  \caption{Discrete Noisy Max}
  \label{algorithm:noisy-max}
\end{algorithm}
\end{small}
\vspace{-10pt}

In~\cite{DR:14:AFDP}, Noisy Max algorithm adds
continuous Laplacian noises to each query result.
The continuous Noisy Max algorithm is proved to effectively protect
privacy for neighboring data sets~\cite{DWWZK:18:DVDP}.
In practice continuous noises however are replaced by discrete noises using
floating-point numbers. Technically, the distribution of discrete
floating-point noises is different from the continuous distribution in
mathematics. Differential privacy can be breached~\cite{M:12:SLSBDP}.
The proof for continuous Noisy Max
algorithm does not immediately apply. Indeed, care must be taken to
avoid privacy breach.

\begin{wrapfigure}[10]{r}{8cm}
\vspace{-20pt}
  \centering
    \resizebox{0.6\columnwidth}{!}{
    \begin{tikzpicture}[->,>=stealth',shorten >=1pt,auto,node
      distance=2cm,node/.style={circle,draw}]
      \node at (-2.3, 1) { $\cdots$ };
      \node[node,scale=.67,label={[shift={(-.35,-.1)},scale=.67]$\blank$}]
           (v011) at (-1.6, 1) { $011$ };
      \node at ( -.8, 1) { $\cdots$ };
      \node[node,scale=.67,label={[shift={(-.35,-.1)},scale=.67]$\blank$}]
           (v120) at (   0, 1) { $120$ };
      \node at ( 1.0, 1) { $\cdots$ };
      \node[node,scale=.67,label={[shift={(-.35,-.1)},scale=.67]$\blank$}]
           (v202) at ( 2.0, 1) { $202$ };
      \node at ( 2.8, 1) { $\cdots$ };

      \node at (-1.8, -1) { $\cdots$ };
      \node[node, scale=0.67,
      label={[shift={(1,-0.5)},scale=0.67]
        \begin{tabular}{cl}
          $\tilde{2}(\frac{1}{2}),\tilde{3}(\frac{1}{2})$
                      & 
         \end{tabular}
          }]
          (022) at (0, -1) { $0\underline{2}\underline{2}$
      };
      \node at ( 2, -1) { $\cdots$ };

      \path
      (v011) edge [left,below] node [rotate=310,scale=.67]
      { $\frac{2}{3} \cdot \frac{1}{3} \cdot \frac{1}{3}$ }
      (022)

      (v120) edge [right,above] node [rotate=270,scale=.67]
      { $\frac{1}{3} \cdot \frac{2}{3} \cdot \frac{1}{6}$ }
      (022)

      (v202) edge [right,below] node [rotate=45,scale=.67]
      { $\frac{1}{6} \cdot \frac{1}{6} \cdot \frac{2}{3}$ }
      (022)
      ;
    \end{tikzpicture}
    }
    \vspace{-7pt}
  \caption{Hidden Markov Model for Noisy Max}
  \label{figure:hmm-noisy-max}
\end{wrapfigure}
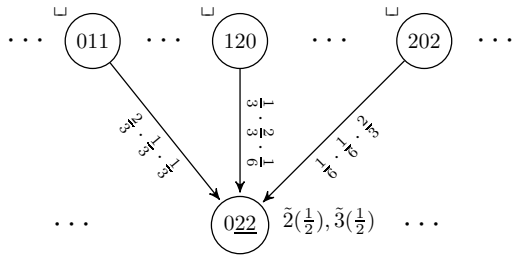

We introduce our algorithm and model. The standard algorithm is modified by adding discrete noises to
query results (Algorithm~\ref{algorithm:noisy-max}). In the
algorithm, the variables $M$ and $r$ contain the maximal noisy result
and its index respectively. We apply the truncated $\frac{1}{2}$-geometric
mechanism to each query with
the corresponding discrete range.
To avoid returning a fixed index when there are multiple noisy results with the same value,
the discrete algorithm
explicitly returns the index of the maximal noisy value with an equal probability (Line. $8-14$).

The HMM model with $n = 3$ queries is illustrated in Fig.~\ref{figure:hmm-noisy-max}.
The top states labeled $011$ and
$120$ correspond to three query results (on neighboring data sets) and $\blank$, i.e. nothing, is observed in the initial states.
Both states have a transition to the state $0\underline{2}\underline{2}$,
representing the perturbed query results obtained with different probabilities.
The index of the maximal result will be observed, which is $2$ or $3$ with probability $\frac{1}{2}$. Next we analyze Algorithm~\ref{algorithm:noisy-max} under the Pufferfish framework.

\hide{
Using our tool,
Algorithm~\ref{algorithm:noisy-max} doesn't satisfy $\ln(2)$-differentially private when $n = 3$.
We test a few values of $\epsilon$, to find out that the differential privacy budget $\epsilon$
for Algorithm~\ref{algorithm:noisy-max} is $1.3$
which is $2.1$.
We can say Algorithm~\ref{algorithm:noisy-max} is an improvement of Algorithm~\ref{algorithm:noisy-max-naive},
but still with a privacy budget larger than $\ln(2)$.
Thus it is not sufficient to replace continuous noises with
discrete noises. Additional steps are required to preserve privacy
and the privacy guarantee should be examined carefully.
An inexperienced mechanism designer may na\"ively discretize
the standard Noisy Max algorithm and put privacy at risk.
Next we analyze Algorithm~\ref{algorithm:noisy-max}
under the Pufferfish framework.
}

\begin{example}
\label{ex:nm1}
Consider three counting queries $f_A$, $f_B$, and $f_C$ for the number
of individuals contracting the diseases $A$, $B$, and $C$ respectively
in the data set $\calX^2$ with $\calX = \{ (0, 0, 0), (0, 0, 1),
\ldots, $(1$, $1,$ $1)$ \}$. An element $(a, b, c) \in \calX$ denotes
whether the data entry contracts the diseases $A$, $B$, and $C$
respectively. Assume that the contraction of each disease is independent
among individuals and the probabilities of contracting the diseases
$A$, $B$, and $C$ are $p_A$, $p_B$, and $p_C$ respectively.
The prior knowledge induces an information state for the
model in Fig.~\ref{figure:hmm-noisy-max}. For example, the state
$120$ has the probability $2p_A(1-p_A) \cdot p_B^2 \cdot (1-p_C)^2$.

Suppose John is in the data set and whether John contracts the disease
$A$ is a secret. We would like to check if the discrete Noisy Max
algorithm can protect the secret using the Pufferfish privacy
framework. Let us compute the initial information state $\pi$ given that
John has not contracted disease $A$. For instance, the initial
probability of the state $120$ is
$\frac{2p_A(1-p_A)}{(1-p_A)^2 + 2p_A(1-p_A)} \cdot p_B^2 \cdot
(1-p_C)^2$. 
The initial information state $\pi$ is obtained by computing the
probabilities of each of the $3^3$ top states.
Given that John has the disease $A$, the initial information state
$\tau$ is
computed similarly. In this case, the initial probability of the state
$120$ becomes
$\frac{2p_A(1-p_A)}{2p_A(1-p_A) + p_A^2} \cdot p_B^2 \cdot
(1-p_C)^2$. Probabilities of the $3^3$ top states form the initial
information state $\tau$.
From the initial information state $\pi$ and $\tau$, we compute the
probabilities of observing $\blank\tilde{1}$, $\blank\tilde{2}$, and
$\blank\tilde{3}$ in the formal model
(Fig.~\ref{figure:hmm-noisy-max}). The formulae for observation
probabilities are easy to compute.
However, the SMT solver Z3 cannot solve the non-linear formulae
generated by our algorithm.
In order to establish Pufferfish privacy automatically, we submit the
non-linear formulae to the constraint solver REDLOG. This time, the solver
successfully proves the HMM satisfying
$\ln(2)$-Pufferfish privacy.
\hide{
Instead, after we compute the probabilities of every observation,
we compare them pair by pair, construct formulas according to the condition of {$\epsilon$}-Pufferfish privacy
and use REDLOG to simplify the formulas by quantifier elimination. Finally, REDLOG returns FALSE for the formulas
we build, which verifies the problem as we want.
}
$\hfill\blacksquare$
\end{example}

Algorithm~\ref{algorithm:noisy-max} is {$\ln(2)$}-Pufferfish
private when the contraction of diseases is independent for data
entries.
\hide{
 which is inspired by when formulating differential privacy
in Pufferfish framework, independence among individuals in the database
is required~\cite{KM:14:PFMPD}.}
Our next step is to analyze the privacy mechanism model when
the contraction of the disease $A$ is correlated among data
entries.


\vspace{-5pt}
\begin{example}
\label{ex:nm2}
Assume that the data set consists of $2$ family members, including John, and there
are $5$ queries which ask the number of patients of $5$ diseases in the data set.
To protect privacy, Algorithm~\ref{algorithm:noisy-max} is applied to query results.
Now assume an attacker has certain prior knowledge:
1. Disease $1$ is so highly contagious that either none or both members infect the disease; 2. Disease $2$ to Disease $5$ are such diseases that every person
has the probability of $p_k$ to catch Disease $k$; and 3. The attacker knows the values of probabilities: $p_k = \frac{k}{10}$ for
$k \in \{3, 4, 5\}$, but does not know the value of $p_2$.
Suppose the secret is whether John has contracted Disease $1$ and we wonder whether there exists such a $p_2$
that $\ln(2)$-Pufferfish private is violated. We can compute the initial distribution pair
$\pi$ and $\tau$ given the above information.
For instance, if John has contracted Disease $1$, then the
initial probability for state $21110$ is 
$p_2(1-p_2) \cdot  (\frac{3}{10}) (1-\frac{3}{10}) \cdot
(\frac{4}{10}) (1-\frac{4}{10}) (1-\frac{5}{10})^2$. Similarly, we obtain the initial information state
given that John has not contracted the disease.
Then FAIER verifies the mechanism does not satisfy $\ln(2)$-Pufferfish private with the synthesized parameter $p_2 = \frac{1}{2}$.$\hfill\blacksquare$
\end{example}

\hide{
We describe a classic scenario of applying Noisy Max. Assume that there
are $3$ diseases $\{A,B,C\}$ and $2$ members ${John, Alice}$ in the data set. People can observe which disease
infects the most people.
In order to protect privacy, Noisy Max algorithm is applied to the results.
We can also set some rule that if more than one disease infects the most people, the smallest index will be observed.
There's one attacker who has some prior knowledge about some of the diseases and members in the data set:
For disease $A$, usually people will contract the disease with probability $p_1$;
for disease $B$, it is highly contagious so that the family members John and Alice will contract
it at the same time with probability $p_2$ or neither will contract it;
the attacker doesn't have any information about disease $C$. Now we want to figure out,
is there much difference in observations between John contracting disease $C$ and John not contacting with those prior knowledge?
That is, we want to know whether this model satisfies some $\epsilon$-Pufferfish privacy.

We simulate this scenario in our model.
The hidden Markov model is just the same as in Fig.~\ref{figure:hmm-noisy-max}, except that
this time there are initial distributions among states caused by the attacker's prior knowledge.
We use $\pi$ and $\tau$ to represent the distributions when John contracts $C$ and not.
We can compute under $\pi$ and $\tau$ the probability starting from every node labeled $(i,j,k)$ at the top of Fig.~\ref{figure:hmm-noisy-max},
where $i$ persons contract $A$, $j$ persons contract $B$ and $k$ persons contract $C$.
For instance, $\pi (0,2,1) = (1-p_1)^2 p_2 \times \frac{1}{2}$. Here the reason for the $\frac{1}{2}$
is that we only assume that John contracts $C$ and thus $k$ could only be $1$ or $2$. Without other information
about disease $C$, a uniform
distribution on $k$ among $\{1,2\}$ is taken. More starting probability could be computed by using the
tables below.

 \begin{table}[H]
  \label{table:pi}
  \caption{Probabilities for node labeled $(i,j,k)$ under $\pi$}
  \centering
    \[
    \begin{array}{c|ccc}
 & 0 & 1 & 2 \\
 \hline
      i\vphantom{p^2}
      & (1-p_1)^2
      & 2p_1(1-p_1)
      & p_1^2
      \\
      j\vphantom{p^2}
      & (1-p_2)
      & 0
      & p_2
      \\
      k\vphantom{p^2}
      & 0
      & \frac{1}{2}
      & \frac{1}{2}
      \end{array}
      \]
\end{table}

   \begin{table}[H]
\caption{Probabilities for node labeled $(i,j,k)$ under $\tau$ }
  \label{table:tau}
    \centering
    \[
    \begin{array}{c|ccc}
& 0 & 1 & 2 \\
\hline
      i\vphantom{p^2}
      & (1-p_1)^2
      & 2p_1(1-p_1)
      & p_1^2
      \\
      j\vphantom{p^2}
      & (1-p_2)
      & 0
      & p_2
      \\
      k\vphantom{p^2}
      & \frac{1}{2}
      & \frac{1}{2}
      & 0
      \end{array}
      \]
\end{table}

By setting values to $p_1$ and $p_2$ and using Algorithm \ref{algorithm:pufferfish-check}, we can get a solution
showing whether it satisfies $\epsilon$-Pufferfish privacy under these values.
In our implementation, we set $p_1 = 0.2$ and $p_2 = 0.4$, and use \zpython to check whether
$\ln(2)$-Pufferfish privacy is satisfied. The SMT solver returns a ``sat'' result, together with an observation
sequence $0$. This shows that $\ln(2)$-Pufferfish privacy is violated under the observation sequence $0$.
What's more, we can also set $p_1$ and $p_2$ as variables and slightly change Algorithm \ref{algorithm:pufferfish-check}
so that SMT solver will return a solution of $p_1$ and $p_2$ such that $\epsilon$-Pufferfish privacy
is preserved under these values.
}

\vspace{-5pt}
Provably correct privacy mechanisms can leak private information by seemingly harmless
modification or assumed prior knowledge. Ideally, privacy guarantees of practical mechanisms need
be re-established. 
Our verification
tool can reveal ill-designed privacy protection mechanisms
easily. 

\hide{In Algorithm~\ref{algorithm:noisy-max}, we have considered
multiple queries with the same maximal noisy value. If
more than one query have the same maximal noisy value, one of their
indices is reported uniformly randomly. A na\"ive adoption of the
standard Noisy Max algorithm only simply returns an index
deterministically. It would not have the same privacy guarantee.}


\hide{
\noindent
\subsubsection{Example 7.3}
In Algorithm~\ref{algorithm:noisy-max}, the query \emph{index} with the
maximal noisy value is reported. This may appear unnatural. After all,
we are more interested in the maximal query result. It could be more
useful if the maximal noisy query result were reported.
Algorithm~\ref{algorithm:noisy-max-bad} shows the discrete noisy max
algorithm reporting the maximal noisy query result instead of the
query index.

In order to analyze the modified mechanism, we first construct its
formal model. For $n = 3$, the hidden Markov model for
Algorithm~\ref{algorithm:noisy-max-bad} is almost identical to
Fig.~\ref{figure:hmm-noisy-max}. The only difference is the
observations and observation distributions for the bottom
states. Since noisy query results are reported, the set of
observations is $\{ \blank, \tilde{0}, \tilde{1}, \tilde{2} \}$. For
each bottom state, the maximal noisy query result of the state is
observed. For instance, $\tilde{2}$ is observed with probability $1$
at the state $0\underline{2}\underline{2}$.

In \emph{Example 7.1}, we have verified Fig.~\ref{figure:hmm-noisy-max}
satisfied $\ln(2)$-Pufferfish privacy when contracting diseases is
independent among data entries. In addition, it also
satisfies $\ln(2)$-differential privacy on every pair of neighbouring states.
However, our tool finds Algorithm~\ref{algorithm:noisy-max-bad} is not
$\ln(2)$-differentially private. It checks all the pairs of neighbouring states and finds the
ratio of probabilities observing $\blank\tilde{0}$ on the neighboring states
$000$ and $110$ is larger than the multiplicative factor of $2$ within
1 second.
$\hfill\blacksquare$

\hide{
There's a variant version of Noisy Max, where it follows the usual settings
except at the last step the maximal
noisy result will be returned, as described in Algorithm \ref{algorithm:noisy-max-bad}.
This version has been considered to be wrong (not strictly proved) in the sense that it doesn't preserve differential privacy.
We can use an hidden Markov model to model this algorithm.
To be specific, the Markov chain part of this model is the same as in Algorithm \ref{algorithm:noisy-max}.
The difference is that the maximal value, instead of corresponding index, is observed in hidden Markov
model. For instance in the noisy-added state labeled $210$,
$2$ as the maximal value will be observed, while in the correct version of Noisy Max the corresponding index $1$ of $\tilde{v}_1$ is observed.
Using our Pufferfish checking algorithm, we can easily check whether
Algorithm \ref{algorithm:noisy-max-bad} satisfies differential privacy.
One important thing to be noted is that, in some cases where $n$ is small,
differential privacy is still preserved.
For instance our implementation returns ``unsat'' when $n=5$, meaning that
$\ln(2)$-differential privacy is still preserved under this case.
This result doesn't violate the general consideration of the algorithm;
instead, it shows that specific privacy scenarios can be verified using our algorithm.
}

\begin{algorithm}
  \begin{algorithmic}[1]
    \Require{$0 \leq v_1, v_2, \ldots, v_n \leq 2$}
    \Ensure{The maximal $v_r$ among $v_1, v_2, \ldots, v_n$}
    \Function{NoisyMax}{$v_1, v_2, \ldots, v_n$}
      \State{$M \leftarrow -1$}
      \For{each $v_i$}
        \Match{$v_i$}
        \Comment{apply $\frac{1}{2}$-geometric mechanism}
          \lCase{$0$}
                {$\tilde{v}_i \leftarrow 0, 1, 2$ with probability
                 $\frac{2}{3}, \frac{1}{6}, \frac{1}{6}$}
          \lCase{$1$}
                {$\tilde{v}_i \leftarrow 0, 1, 2$ with probability
                 $\frac{1}{3}, \frac{1}{3}, \frac{1}{3}$}
          \lCase{$2$}
                {$\tilde{v}_i \leftarrow 0, 1, 2$ with probability
                 $\frac{1}{6}, \frac{1}{6}, \frac{2}{3}$}
        \EndMatch
        \If{$M < \tilde{v}_i$}
          \State{$M \leftarrow \tilde{v}_i$}
        \EndIf
      \EndFor
      \State{\Return {$M$} }
    \EndFunction
  \end{algorithmic}
  \caption{Wrong Version of Noisy Max}
  \label{algorithm:noisy-max-bad}
\end{algorithm}
}

\subsection{Above Threshold}
\label{subsection:above-threshold}

Above threshold is a classical differentially private mechanism
for releasing numerical information~\cite{DR:14:AFDP}. Consider a data
set and an \textit{infinite} sequence of counting queries $f_1, f_2, \ldots$.
We would like to know the index of the first
query whose result is above a given threshold. In order to protect
privacy, the classical algorithm adds continuous noises on the
threshold and each query result. If the noisy query result is less
than the noisy threshold, the algorithm reports $\bot$ and continues
to the next counting query. Otherwise, the algorithm reports $\top$
and stops.

We consider counting queries with range $\{ 0, 1, 2 \}$ and apply the
truncated geometric mechanism for discrete noises. The discrete
above threshold algorithm is shown in
Algorithm~\ref{algorithm:above-threshold}. The algorithm first obtains
the noisy threshold $\tilde{t}$ using the truncated
$\frac{1}{4}$-geometric mechanism. For each query result $r_i$, it
computes a noisy result $\tilde{r}_i$ by applying the truncated
$\frac{1}{2}$-geometric mechanism. If $\tilde{r}_i < \tilde{t}$,
the algorithm outputs $\bot$ and continues. Otherwise, it halts with
the output $\top$.

\vspace{-15pt}
\subsubsection{Algorithm and Model}
To ensure $\epsilon$-differential privacy, the classical algorithm
applies the $\frac{2}{\epsilon}$- and $\frac{4}{\epsilon}$-Laplace
mechanism to the threshold and each query result respectively. The
continuous noisy threshold and query results are hence
$\frac{\epsilon}{2}$- and $\frac{\epsilon}{4}$-differentially private.
In Algorithm~\ref{algorithm:above-threshold}, the discrete noisy
threshold and query results are ${2\ln(2)}$- and
${\ln(2)}$-differentially private. If the classical proof still
applies, we expect the discrete above threshold algorithm is
${4\ln(2)}$-differentially private for $\frac{\epsilon}{2} = 2\ln(2)$.

Fig.~\ref{figure:hmm-above-threshold} gives an HMM
for Algorithm~\ref{algorithm:above-threshold}. In the model, the
state $t_ir_j$ represents the input threshold $t = i$ and the first
query result $r = f_1 (\od) = j$ for an input data set $\od$. From the
state $t_ir_j$, we apply the truncated $\frac{1}{4}$-geometric
mechanism. The state $\tilde{t}_ir_j$ hence means the noisy threshold
$\tilde{t} = i$ with the query result $r = j$. For instance, the state
$t_0r_1$ transits to $\tilde{t}_1r_1$ with probability $\frac{3}{20}$.
After the noisy threshold is obtained, we compute a noisy query result
by the truncated $\frac{1}{2}$-geometric mechanism. The state
$\tilde{t}_i\tilde{r}_j$ represents the noisy threshold $\tilde{t} =
i$ and the noisy query result $\tilde{r} = j$. In the figure, we
see that the state $\tilde{t}_1r_0$ moves to $\tilde{t}_1\tilde{r}_0$
with probability $\frac{2}{3}$. At the state $\tilde{t}_i\tilde{r}_j$,
$\top$ is observed if $j \geq i$; otherwise, $\bot$ is observed. From
the state $\tilde{t}_i\tilde{r}_j$, the model transits to the states
$\tilde{t}_ir_0$, $\tilde{t}_ir_1$, $\tilde{t}_ir_2$ with uniform
distribution. This simulates the next query result in
Algorithm~\ref{algorithm:above-threshold}. The model then continues to
process the next query.

\begin{algorithm}
  \begin{algorithmic}[1]
    \Procedure{AboveThreshold}{$\od$, $\{ f_1, f_2, \ldots \}$, $t$}
      \Match{$t$}
        \Comment{apply $\frac{1}{4}$-geometric mechanism}
        \lCase{$0$}{$\tilde{t} \leftarrow 0,1,2$ with probability
          $\frac{4}{5}, \frac{3}{20}, \frac{1}{20}$}
        \lCase{$1$}{$\tilde{t} \leftarrow 0,1,2$ with probability
          $\frac{1}{5}, \frac{3}{5}, \frac{1}{5}$}
        \lCase{$2$}{$\tilde{t} \leftarrow 0,1,2$ with probability
          $\frac{1}{20}, \frac{3}{20}, \frac{4}{5}$}
      \EndMatch
      \For{each query $f_i$}
        \State{$r_i \leftarrow f_i (\od)$}
        \Match{ $r_i$}
          \Comment{apply $\frac{1}{2}$-geometric mechanism}
          \lCase{$0$}{$\tilde{r}_{i} \leftarrow 0,1,2$ with probability
            $\frac{2}{3}, \frac{1}{6}, \frac{1}{6}$}
          \lCase{$1$}{$\tilde{r}_{i} \leftarrow 0,1,2$ with probability
            $\frac{1}{3}, \frac{1}{3}, \frac{1}{3}$}
          \lCase{$2$}{$\tilde{r}_{i} \leftarrow 0,1,2$ with probability
            $\frac{1}{6}, \frac{1}{6}, \frac{2}{3}$}
        \EndMatch
        \State{\textbf{if} $\tilde{r}_i \geq \tilde{t}$ \textbf{then}
          \textbf{halt} with $a_i = \top$ \textbf{else} $a_i = \bot$}
      \EndFor
    \EndProcedure
  \end{algorithmic}
  \caption{Input: private database $\od$,
    counting queries $f_i : \od \rightarrow \{ 0, 1, 2 \}$,
    threshold $t \in \{ 0, 1, 2 \}$; Output: $a_1, a_2, \ldots$}
  \label{algorithm:above-threshold}
\end{algorithm}

The bottom half of Fig.~\ref{figure:hmm-above-threshold} is another
copy of the model. All states in the second copy are
$\underline{\textmd{underlined}}$. For instance, the state
$\tilde{\underline{t}}_2\underline{r}_0$ represents the noisy
threshold is $2$ and the query result is $0$. Given an observation
sequence, the two copies are used to simulate the mechanism
conditioned on the prior knowledge with the two secrets. In the figure, we define the observation set $\Omega = \{
\blank, \bot, \top, 00, 01, 10, 11, 12, 21, 22$, $\spadesuit$,
$\heartsuit$, $\diamondsuit$, $\clubsuit \}$. At initial states
$t_ir_j$ and $\underline{t}_i\underline{r}_j$, only $\blank$ can be
observed. When the noisy threshold is greater than the noisy query result
($\tilde{t}_i\tilde{r}_j$ and
$\tilde{\underline{t}}_i\tilde{\underline{r}}_j$ with $i > j$), $\bot$
is observed. Otherwise, $\top$ is observed at states
$\tilde{t}_i\tilde{r}_j$ and
$\tilde{\underline{t}}_i\tilde{\underline{r}}_j$ with $i \leq j$.
Other observations are used to ``synchronize'' query results for neighboring data sets. More details are explained in Appendix 2.
\vspace{-20pt}

\hide{
Recall that the states $\tilde{t}_ir_j$ represent the current query
result is $j$ for the top copy. For counting queries on neighboring
data sets, the query result in the bottom copy can differ from $j$ by
at most $1$. That is, when the top copy is at the state
$\tilde{t}_ir_j$, the bottom copy can only be at the states
$\tilde{\underline{t}}_k\underline{r}_l$ with $| j - l | \leq 1$.
Particularly, the top state $\tilde{t}_1r_2$ forbids the bottom state
$\tilde{\underline{t}}_1\underline{r}_0$. We use the observation $jl$
to denote the top query result is $j$ at $\tilde{t}_ir_j$ and the
bottom query result is $l$ at
$\tilde{\underline{t}}_k\underline{r}_l$. For instance, $01$ is
observed at the states $\tilde{t}_1r_0$ and
$\tilde{\underline{t}}_2\underline{r}_1$.
The four unique symbols $\spadesuit$,
$\heartsuit$, $\diamondsuit$, $\clubsuit$ make the sum of observation probabilities equal to $1$ in related states.
Besides, $\spadesuit$, $\heartsuit$ only show in the top half
and $\diamondsuit$, $\clubsuit$ only show in the bottom half to ensure that no common observations involving these symbols can be obtained
from these two halves.
For any observation sequence
$j_1l_1\ j_2l_2 \cdots j_wl_w$, the top copy simulates the above
threshold mechanism with query results $j_1, j_2, \ldots, j_w$ and the
bottom copy simulates the mechanism with query results $l_1, l_2,
\ldots, l_w$ such that $| j_n - l_n | \leq 1$ for all $1 \leq n \leq w$.
Note that $02$ and $20$ are not observations since the difference of results for
a same query on neighboring data sets can't be more than 1.
The modeling process is further explained in the Appendix.
Note that if we use a Markov chain to model this algorithm, the number of states
will immediately explode from $6n$ to $10n$, where $n$ is the number of states in a row in Fig.~\ref{figure:hmm-above-threshold},
due to the multiple observations in one state and the transition probabilities follow to change and explode in numbers accordingly.
}

\subsubsection{Differential Privacy Analysis}
We can now perform differential privacy analysis using the HMM in Fig.~\ref{figure:hmm-above-threshold}. By
construction, each observation corresponds to a sequence of queries on
neighboring data sets and their results. If the proof of continuous
above threshold mechanism could carry over to our discretized
mechanism, we would expect differences of observation probabilities
from neighboring data sets to be bounded by the multiplicative factor
of $e^{4 \ln (2)} = 16$. Surprisingly, our tool always reports
larger differences as the number of queries increases. After
generalizing finite observations found by $Z3$, we obtain an
observation sequence of an arbitrary length described below.

Fix $n > 0$.
Consider a data set $\od$ such that $f_i (\od) = 1$ for $1 \leq i \leq
n$ and $f_{n+1} (\od) = 2$. A neighbor $\od'$ of $\od$ may
have $f_i (\od') = 2$ for $1 \leq i \leq n$ and $f_{n+1} (\od') = 1$.
Note that $|f_i (\od) - f_i (\od')| \leq 1$ for $1 \leq i
\leq n + 1$. $f_i$'s are counting queries. Suppose the threshold $t =
2$. Let us compute
the probabilities of observing $\bot^n\top$ on $\od$ and $\od'$.

\begin{figure}
  \centering
    \resizebox{0.75\columnwidth}{!}{
    \begin{tikzpicture}[->,>=stealth',shorten >=1pt,auto,node
      distance=2cm,node/.style={circle,draw,inner sep=0pt,minimum size=25pt}]
      \node[node, label={[shift={(-.6,-.5)}]$\blank$}]
            (t0r1) at (0, 4.65) { $t_0r_1$ };

      \draw (3.2, 3.95) rectangle (5.2, .05);
      \node at (4.95, 2.9) { $\cdots $};
      \node[node]
            (t'0r1) at (4.2, 2.9) { $\tilde{t}_0r_1$ };
      \node at (3.5, 2.9) { $\cdots $};
      \draw (-3.1, 3.95) rectangle (3.1, .05);
      \node[node, label={[shift={(0, 0)}, scale=.75]$00, 01, \heartsuit$}]
            (t'1r0) at (2.1, 2.9) { $\tilde{t}_1r_0$ };
      \node[node, label={[shift={(.6, 0)}, scale=.75]$10, 11, 12$}]
            (t'1r1) at (0, 2.9) { $\tilde{t}_1r_1$ };
      \node[node, label={[shift={(0, 0)}, scale=.75]$\spadesuit, 21, 22$}]
            (t'1r2) at (-2.1, 2.9) { $\tilde{t}_1r_2$ };
      \node at (0, 2.0) { $\vdots$ };
      \node at (-2.1, 2.0) { $\vdots$ };
      \draw (-3.2, 3.95) rectangle (-5.2, .05);
      \node at (-3.45, 2.9) { $\cdots$ };
      \node[node] (t'2r1) at (-4.2, 2.9) { $\tilde{t}_2r_1$ };
      \node at (-4.9, 2.9) { $\cdots$ };

      \node[node, label={[shift={(-.4, -1.3)}]$\bot$}]
            (t'1r'0) at (2.1, .9) { $\tilde{t}_1\tilde{r}_0$ };
      \node[node, label={[shift={(-.4, -1.3)}]$\top$}]
            (t'1r'1) at (0, .9) { $\tilde{t}_1\tilde{r}_1$ };
      \node[node, label={[shift={( .4, -1.3)}]$\top$}]
            (t'1r'2) at (-2.1, .9) { $\tilde{t}_1\tilde{r}_2$
      };

      \node[node, label={[shift={(-0.6,-0.7)}]$\blank$}]
            (tt0rr0) at (2.4, -4.65) { $\underline{t}_0\underline{r}_0$ };

      \draw (3.2, -3.95) rectangle (5.2, -.05);
      \node[node] (tt'0rr0) at (4.5, -2.9) { $\tilde{\underline{t}}_0\underline{r}_0$ };
      \node at (3.65, -2.9) { $\cdots$ };
      \draw (1.1, -3.95) rectangle (3.1, -.05);
      \node[node] (tt'1rr0) at (2.4, -2.9) { $\tilde{\underline{t}}_1\underline{r}_0$ };
      \node at (1.55, -2.9) { $\cdots$ };

      \draw (1., -3.95) rectangle (-5.2, -.05);
      \node[node, label={[shift={(0, -1.5)}, scale=.75]$00, 10, \clubsuit$}]
            (tt'2rr0) at (-0., -2.9)
            { $\tilde{\underline{t}}_2\underline{r}_0$ };
      \node[node, label={[shift={(0, -1.5)}, scale=.75]$01, 11, 21$}]
            (tt'2rr1) at (-2.1, -2.9)
            { $\tilde{\underline{t}}_2\underline{r}_1$ };
      \node[node, label={[shift={(0, -1.5)}, scale=.75]$\diamondsuit, 12, 22$}]
            (tt'2rr2) at (-4.2, -2.9)
            { $\tilde{\underline{t}}_2\underline{r}_2$ };

      \node[node, label={[shift={(-.4, -.1)}]$\bot$}]
            (tt'2rr'0) at (-.0, -.9)
            { $\tilde{\underline{t}}_2\tilde{\underline{r}}_0$ };

      \node at (-2.1, -1.75) { $\vdots$ };

      \node[node, label={[shift={(-.4, -.1)}]$\bot$}]
            (tt'2rr'1) at (-2.1, -.9)
            { $\tilde{\underline{t}}_2\tilde{\underline{r}}_1$ };
      \node at (-4.2, -1.75) { $\vdots$ };

      \node[node, label={[shift={(-.4, -.2)}]$\top$}]
            (tt'2rr'2) at (-4.2, -.8)
            { $\tilde{\underline{t}}_2\tilde{\underline{r}}_2$ };

      \path
      (t0r1) edge [bend left=22.5] node [above] { $\frac{4}{5}$ } (t'0r1)
      (t0r1) edge node [left] { $\frac{3}{20}$ } (t'1r1)
      (t0r1) edge [bend right=22.5] node [above] { $\frac{1}{20}$ } (t'2r1)

      (t'1r0) edge [bend right, very thick] node [right] { } (t'1r'0)
      (t'1r0) edge [ultra thin] node [above] { } (t'1r'1)
      (t'1r0) edge [ultra thin] node [above] { } (t'1r'2)

      (t'1r'0) edge [bend right] node {  } (t'1r0)
      (t'1r'0) edge node {  } (t'1r1)
      (t'1r'0) edge node {  } (t'1r2)

      (tt0rr0) edge [bend right=22.5] node [below] { $\frac{4}{5}$ } (tt'0rr0)
      (tt0rr0) edge node [left] { $\frac{3}{20}$ } (tt'1rr0)
      (tt0rr0) edge [bend left=22.5] node [below] { $\frac{1}{20}$ } (tt'2rr0)

      (tt'2rr0) edge [bend right, very thick] node [right] { } (tt'2rr'0)
      (tt'2rr0) edge [ultra thin] node [above] { } (tt'2rr'1)
      (tt'2rr0) edge [ultra thin] node [above] { } (tt'2rr'2)

      (tt'2rr'0) edge [bend right] node { } (tt'2rr0)
      (tt'2rr'0) edge node { } (tt'2rr1)
      (tt'2rr'0) edge node { } (tt'2rr2)
      ;
      \end{tikzpicture}
    }
  \caption{Hidden Markov Model for Above Threshold}
  \label{figure:hmm-above-threshold}
\end{figure}
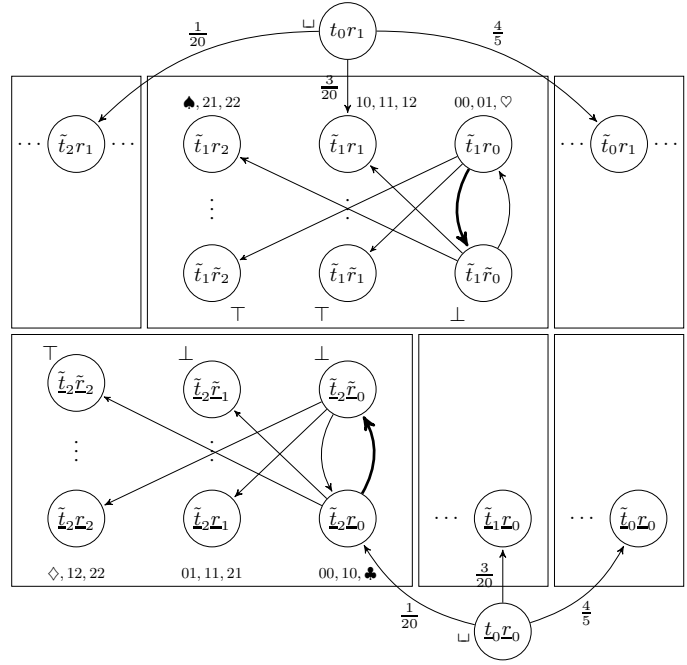

If $\tilde{t} = 0$, $\tilde{f}_1 \geq \tilde{t}$. The algorithm
reports $\top$ and stops. We cannot observe $\bot^n\top$: recall the assumption that $n > 0$.
It suffices to consider $\tilde{t} = 1$ or $2$. When $\tilde{t} = 1$, $\tilde{f}_i
(\od) = 0$ for $1 \leq i \leq n$ and $\tilde{f}_{n+1} (\od) \geq 1$.
Recall $f_i (\od) = 1$ for $1 \leq i \leq n$ and $f_{n+1} (\od) = 2$.
The probability of observing $\bot^n\top$ is
$(\frac{1}{3})^n \cdot \frac{5}{6}$. When $\tilde{t} = 2$, $\tilde{f}_1
(\od) \leq 1$ for $1 \leq i \leq n$ and $\tilde{f}_{n+1} (\od) = 2$. The
probability of observing $\bot^n\top$ is thus
$(\frac{2}{3})^n \cdot \frac{2}{3}$. In summary, the probability of
observing $\bot^n\top$ with $\od$ when $t = 2$ is
$\frac{3}{20} \cdot (\frac{1}{3})^n \cdot \frac{5}{6} + \frac{4}{5}
\cdot (\frac{2}{3})^n \cdot \frac{2}{3}$.
The case for $\od'$ is similar. When $\tilde{t} = 1$, the probability
of observing $\bot^n\top$ is $(\frac{1}{6})^n \cdot
\frac{2}{3}$. When $\tilde{t} = 2$, the probability of observing the
same sequence is $(\frac{1}{3})^n \cdot
\frac{1}{3}$. Hence the probability of observing $\bot^n\top$ with
$\od'$ when $t = 2$ is $\frac{3}{20} \cdot (\frac{1}{6})^n \cdot
\frac{2}{3} + \frac{4}{5} \cdot (\frac{1}{3})^n \cdot \frac{1}{3}$. Now,

\begin{align*}
    \frac{ \Pr(\omega = \bot^n\top | \od, t = 2) }
       { \Pr(\omega = \bot^n\top | \od', t = 2) }
  &=
  \frac{
    \frac{3}{20} \cdot (\frac{1}{3})^n \cdot \frac{5}{6} + \frac{4}{5}
    \cdot (\frac{2}{3})^n \cdot \frac{2}{3} }
       {
    \frac{3}{20} \cdot (\frac{1}{6})^n \cdot \frac{2}{3} + \frac{4}{5}
    \cdot (\frac{1}{3})^n \cdot \frac{1}{3} }\\
  &>
  \frac{ \frac{4}{5} \cdot (\frac{2}{3})^n \cdot \frac{2}{3} }
       { \frac{3}{20} \cdot (\frac{1}{3})^n \cdot \frac{2}{3} +
         \frac{4}{5} \cdot (\frac{1}{3})^n \cdot \frac{1}{3} }
  = \frac{ \frac{8}{15} (\frac{2}{3})^n }
         { \frac{11}{30} (\frac{1}{3})^n }
  = \frac{16}{11} \cdot 2^n.
\end{align*}

\vspace{-5pt}
We see that the ratio of $\Pr (\omega = \bot^n\top | \od, t = 2)$ and
$\Pr (\omega = \bot^n\top | \od', t = 2)$ can be arbitrarily large.
Unexpectedly, the discrete above threshold
cannot be $\epsilon$-differentially private for any $\epsilon$.
Replacing continuous noises with truncated
discrete noises does not preserve any privacy at all.
This case emphasizes the importance of applying verification technique
to practical implementations.

\section{Combining Techniques for Differential Privacy}
\label{section:experiments}
In this section, we investigate into two state-of-the-art
tools for detecting violations of differential privacy, namely StatDP~\cite{DWWZK:18:DVDP} and DP-Sniper~\cite{BSBV:21:dpsniper}, to compare with our tool. We decide to choose these tools as baselines since they support programs with arbitrary loops and  
arbitrary sampling distributions.  On the contrary, DiPC~\cite{BCJSV:20:ddppfio,BCJ:19:DiPC}, DP-Finder~\cite{BTDPM:18:dpfinder} and CheckDP~\cite{WDKZ:20:CDP} et al. do not support arbitrary loops or only synthesize proofs for privacy budget $\epsilon$ when Laplace distributions are applied. In order to compare with our tool FAIER, the discrete mechanisms with truncated geometric distributions are implemented in
these tools. We present comparisons in Subsection~\ref{subsection:com}, and moreover, in Subsection~\ref{sub:combine}, we discuss how testing
and our verification technique can be combined to certify counterexamples and find the precise lower bound for privacy budget.
\vspace{-5pt}

\subsection{Comparison}
\label{subsection:com}
\subsubsection{Different problem statements}
As all the tools can be used to find the privacy budget $\epsilon$ for differential private mechanisms, the problem statements they address are different: I. With a fixed value of $\epsilon$, StatDP runs the mechanism repeatedly and tries to report the output event that makes the mechanism violate $\epsilon$-differential privacy, with a p-value as the confidence level. If the p-value is below 0.05, StatDP is of high confidence that $\epsilon$-differential privacy is violated; Otherwise the mechanism is very likely (depending on the p-value) to satisfy. II. On the other hand, DP-Sniper aims to learn for the optimal attack that maximizes the ratio of probabilities for certain outputs on all the neighboring inputs. Therefore it returns the corresponding ``optimal'' witness (neighboring inputs) along with a value $\epsilon$ such that the counterexample violates $\epsilon$-differential privacy with $\epsilon$ as large as possible. III. Differently, FAIER makes use of the HMM model and examines all the pairs of neighboring inputs and outputs to make sure that $\epsilon$-differential privacy is satisfied by all cases, or violated by an counterexample, with a fixed value of $\epsilon$. IV. Note that FAIER is aimed at Pufferfish privacy verification where prior knowledge can affect the data sets distributions and unknown parameters are allowed, which are not involved in the other tools. Meanwhile, the others support continuous noise while FAIER does not (unless an HMM with finite state space can be obtained).

\hide{
The difference of our tool and theirs lies in that we use modeling and verification to
examine the entire state space and compute probabilities, while they confine search space
of databases and events and use sampling and testing to determine probabilities.
As far as we know, StatDP cannot deal with Pufferfish privacy problems for the reason that unknown parameters
in the prior knowledge may cause uncertainty to database distributions,
which makes it hard to sample and test.

The testing tool works and is interpreted as follows: run hypothesis testing for a set of values of privacy budget $\epsilon$
and compute the corresponding p-value of the mechanism. The p-value represents the confidence to reject the
null hypothesis. Specifically, if the p-value is small (less than 0.05) for some $\epsilon_0$,
it rejects the hypothesis, which indicates that the mechanism violates $\epsilon_0$-differential privacy;
otherwise if the p-value is very close
to 1, the mechanism is more ``likely'' to satisfy $\epsilon_0$-differential privacy.
Fig.~\ref{figure:nma} is for the naive noisy-max algorithm which returns the first
index of the maximal perturbed results (Algorithm~\ref{algorithm:noisy-max-naive}). The p-value sticks to
$0$ whenever $\epsilon$ increases, which means that the algorithm does not satisfy $\epsilon$-differential privacy
for $\epsilon$ in [0,2] and is consistent with our analysis.
Note that a value of $\epsilon$ whose p-value close to $1$ indicates the
true privacy budget. Fig.~\ref{figure:nmb} is the testing result for the adapted noisy-max algorithm which returns
the index of maximal value with equal probabilities if there are multiple ones (Algorithm~\ref{algorithm:noisy-max}).
This time, the p-value rises to 1 when $\epsilon$ is $1.6$, which implicitly implies this algorithm
satisfy $1.6$-differential privacy.

  \begin{figure}[htbp]
 \begin{subfigure}[c]{.45\columnwidth}
  \centering
    \includegraphics[width=\linewidth]{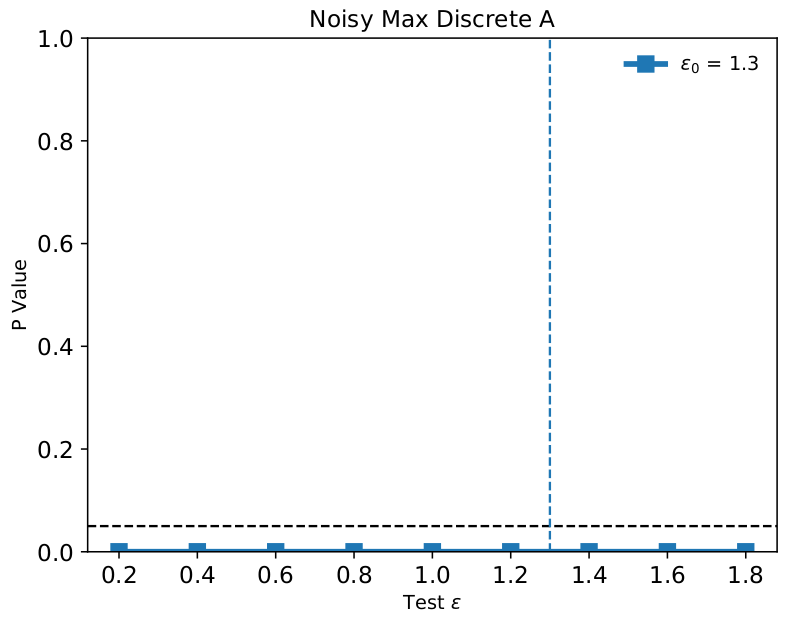}
    \caption{Testing for Algorithm~\ref{algorithm:noisy-max-naive}}
    \label{figure:nma}
    \end{subfigure}
    \begin{subfigure}[c]{.45\columnwidth}
  \centering
    \includegraphics[width=\linewidth]{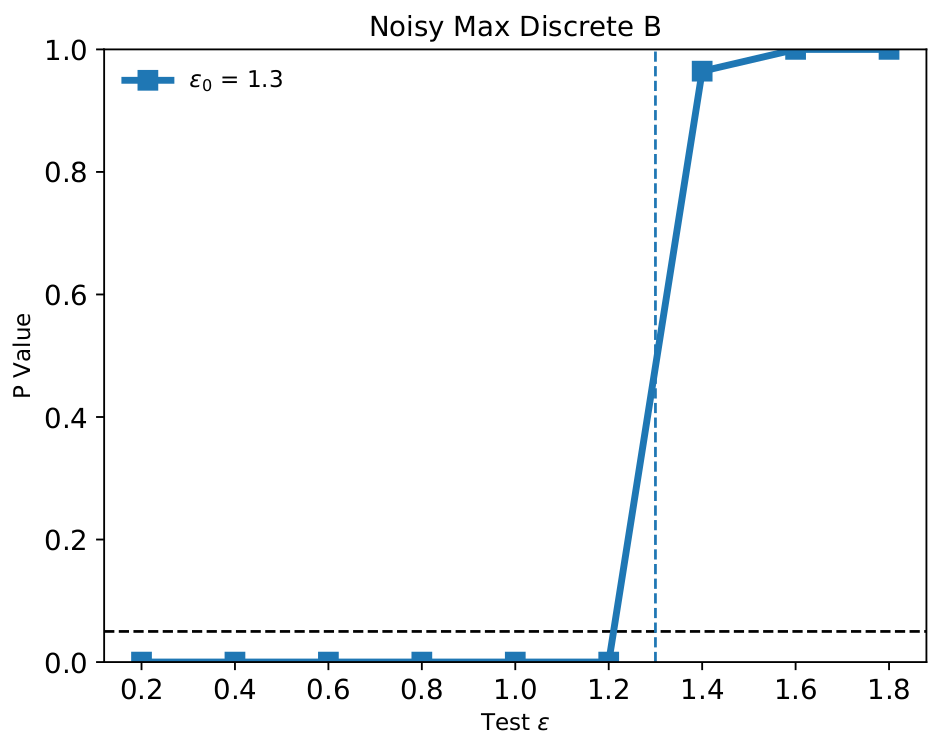}
    \caption{Testing for Algorithm~\ref{algorithm:noisy-max}}
    \label{figure:nmb}
    \end{subfigure}
  \begin{subfigure}[c]{.45\columnwidth}
    \centering
    \includegraphics[width=\linewidth]{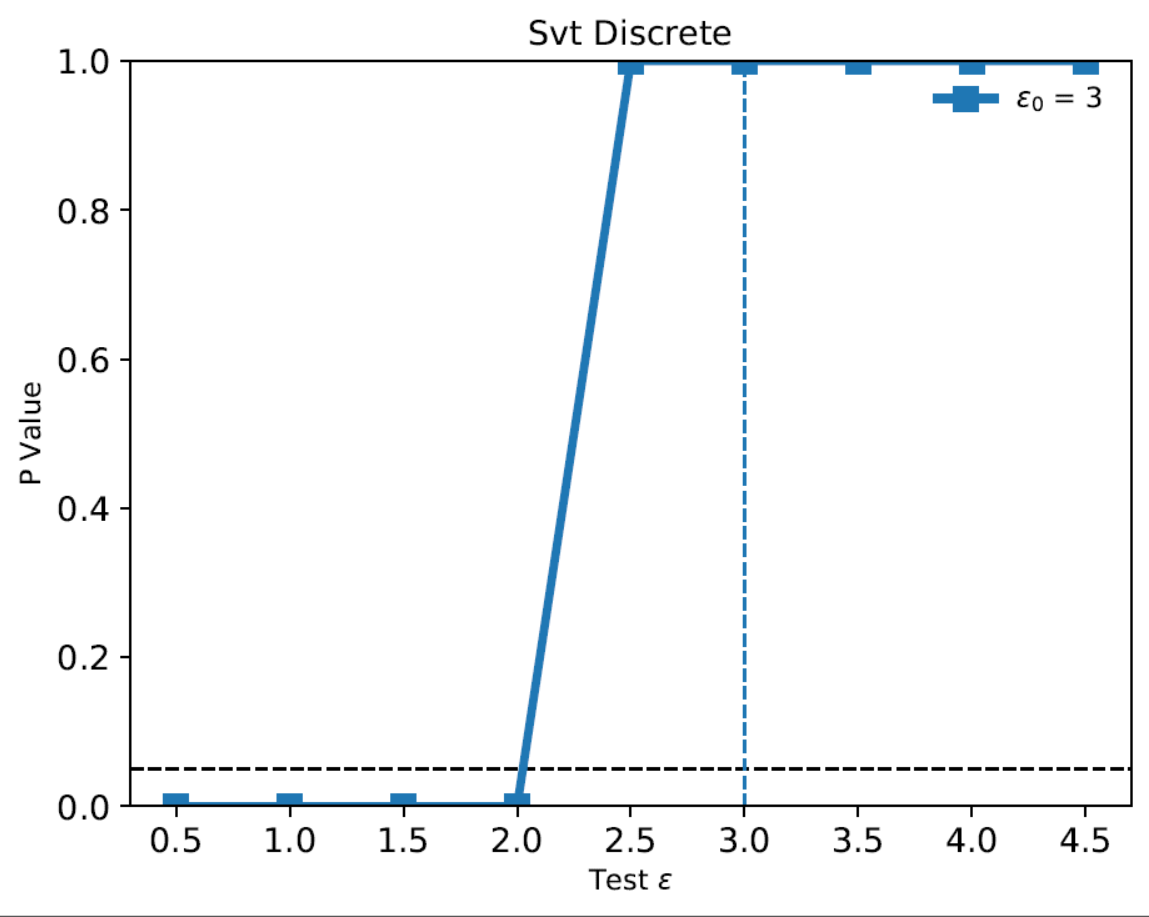}
    \caption{Testing for Algorithm~\ref{algorithm:above-threshold} with databases of lengths 2-3}
    \label{figure:svt_2-3}
  \end{subfigure}
   \begin{subfigure}[c]{.45\columnwidth}
    \centering
    \includegraphics[width=\linewidth]{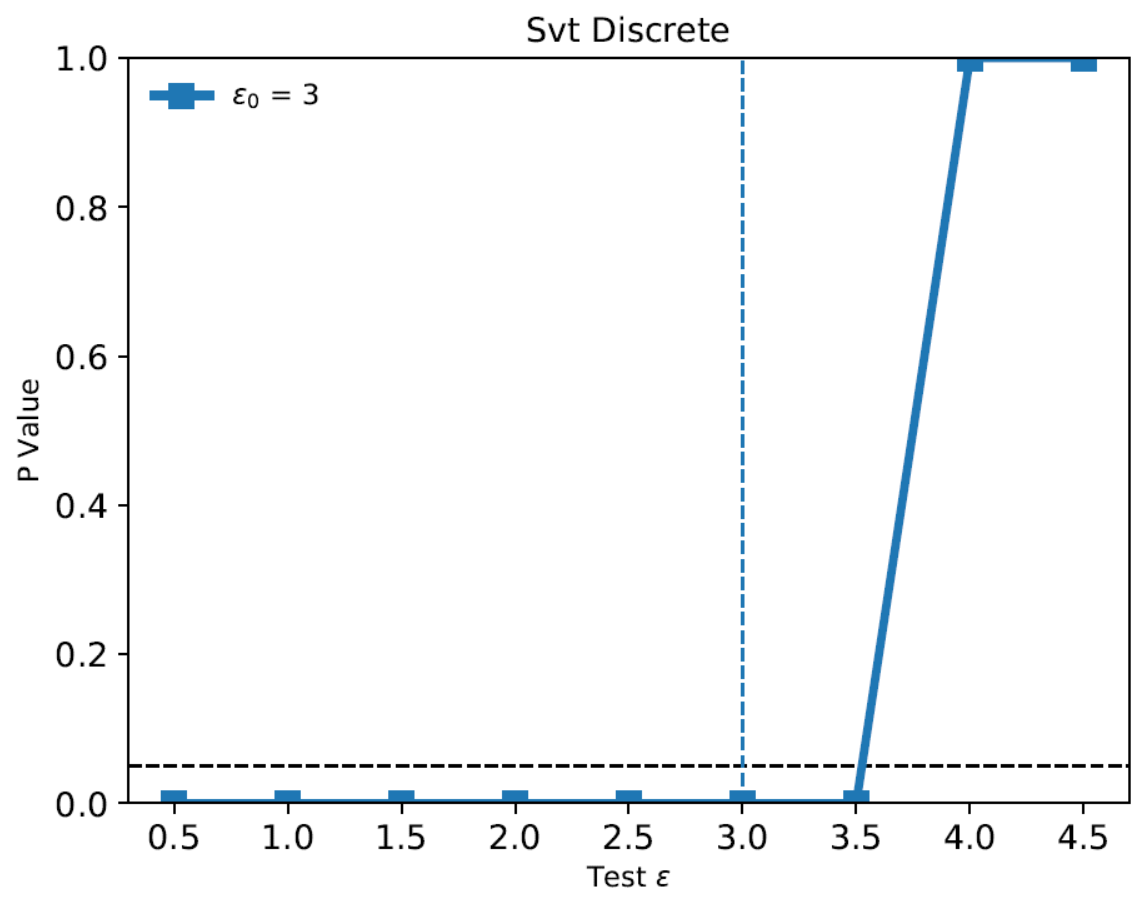}
    \caption{Testing for Algorithm~\ref{algorithm:above-threshold} with databases of lengths 10-15}
    \label{figure:svt_10-15}
  \end{subfigure}
  \begin{subfigure}[c]{.45\columnwidth}
    \centering
    \includegraphics[width=\linewidth]{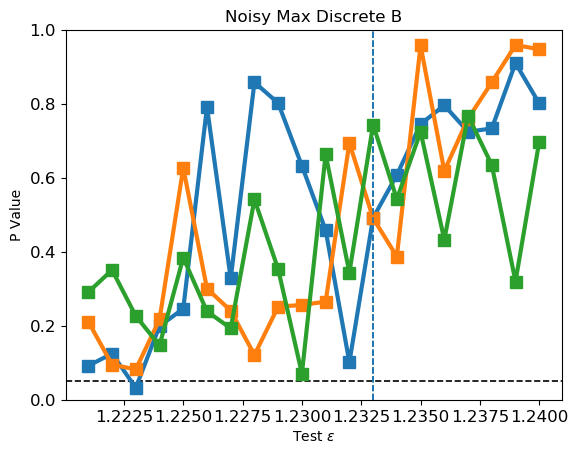}
    \caption{Apply the same testing for Algorithm~\ref{algorithm:noisy-max} 3 times}
    \label{figure:3times}
  \end{subfigure}
   \begin{subfigure}[c]{.45\columnwidth}
    \centering
    \includegraphics[width=\linewidth]{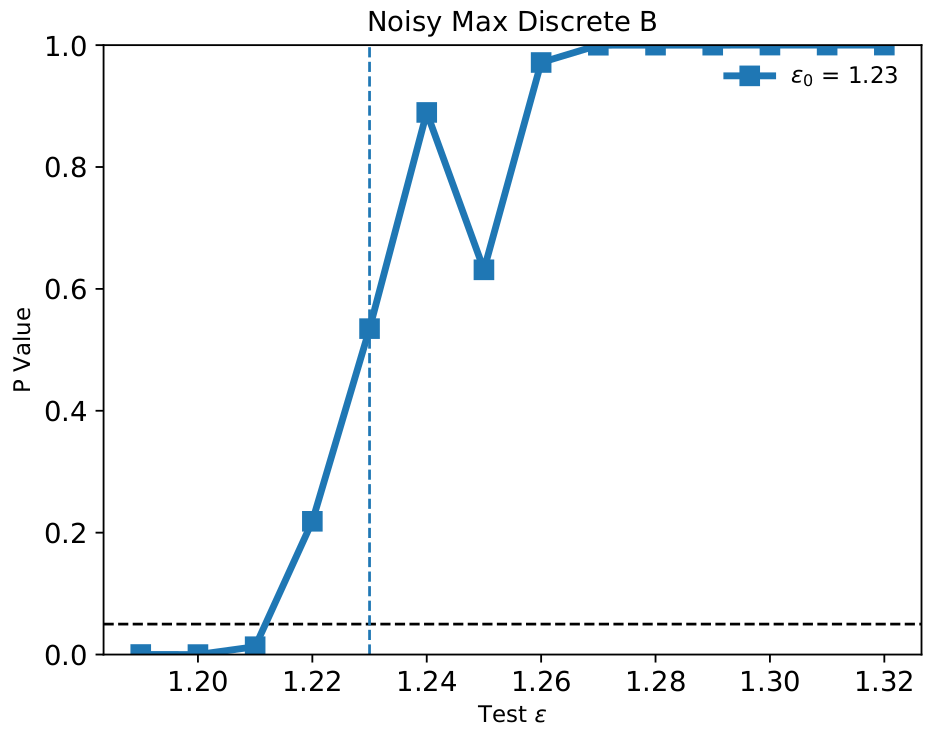}
    \caption{Testing for Algorithm~\ref{algorithm:noisy-max} to find the interval $I$}
    \label{figure:dRNM}
    \end{subfigure}
  \caption{Evaluation for StatDP}
    \end{figure}
}

\begin{wraptable}[12]{r}{7cm}
\vspace{-15pt}
\caption{Heuristic input patterns used in StatDP and DP-Sniper, from ~\cite{DWWZK:18:DVDP}}
  \label{table:InputPattern}
  \centering
  \vspace{-10pt}
  \[
    \begin{array}{c|cc}
      Category & D_1 & D_2 \\
      \hline
      \textmd{One Above}
      & [1,1,1,1,1]
      & [2,1,1,1,1]
      \\
            \textmd{One Below}
      & [1,1,1,1,1]
      & [0,1,1,1,1]
      \\
            \textmd{One Above Rest Below}
      & [1,1,1,1,1]
      & [2,0,0,0,0]
      \\
            \textmd{One Below Rest Above}
      & [1,1,1,1,1]
      & [0,2,2,2,2]
      \\
            \textmd{Half Half}
      & [1,1,1,1,1]
      & [0,0,2,2,2]
      \\
            \textmd{All Above \& All Below}
      & [1,1,1,1,1]
      & [2,2,2,2,2]
      \\
      \textmd{X Shape}
      & [1,1,0,0,0]
      & [0,0,1,1,1]\\
    \end{array}
  \]
\end{wraptable}

\vspace{-10pt}
\subsubsection{Efficiency and precision}
We make comparison of the tools in terms of efficiency and precision by performing experiments on Discrete Noisy Max (Algorithm.~\ref {algorithm:noisy-max}) with $n=5$ queries. The lower bound~\cite{BTDPM:18:dpfinder} of the privacy budget, i.e., 
the largest $\epsilon$ that the mechanism is not $\epsilon$-differential privacy, is $1.372$ up to a precision of 0.001.
I. Fix an $\epsilon$, StatDP takes $8$ seconds on average to report an event 0 along with a p-value under the usual setting of $100k/500k$ times for event selection/counterexample detection.
However, there is a need for specifying the range of $\epsilon$
in advance and more values of $\epsilon$ to test
will consume more time. We first select $\epsilon$ increasingly with a step of $0.1$ in the range of [0, 2]. Then the range is narrowed down according to the p-values and we select $\epsilon$ in the range with a smaller step $0.01$ and so on. The similar process also applies for FAIER. Altogether StatDP takes around $600$ seconds to get an overview of the results. Fast enough, though, it has the drawback of instability and the precision is lower than the other tools.
It reports the mechanism satisfies $1.344$-differential privacy in the first execution, which is incorrect, and reports it violates $1.353$-differential privacy in the second execution.

II. DP-Sniper returns a witness [0,2,2,2,2] and [1,1,1,1,1] with $\epsilon = 1.371$ for three times, which is correct, stable and the result is almost the true lower bound. However, it takes around $4600$ seconds on average to train a multi-layer perceptron with $10000k$ samples and get this result. Unlike the evaluation in~\cite{BSBV:21:dpsniper}, DP-Sniper performs much slower than StatDP when it comes to discrete random noise. The reason is that DP-Sniper cannot use high-efficient sampling commands such as numpy.random.laplace to get all the samples at once. It has to calculate and sample different distributions according to different inputs. We've tried to use numpy.random.choice to sample different distributions, but it is inefficient for small vectors and wouldn't terminate for more than 10 hours in our experiment. We've also tried to reduce the number of samples to $1000k$. This time it terminates with $308$ seconds with an imprecise $\epsilon = 1.350$.

III. FAIER takes less than 1 second to build the HMM model and 160 seconds to compute SMT query for every data set (possible initial state), which will be later used to compute on neighboring data sets if an $\epsilon$ is assigned. The results returned by FAIER are the most precise ones. It takes Z3 $523$ seconds to verify that $1.373$-differential privacy is satisfied and $234$ seconds that $1.372$-differential privacy is violated witnessed by the input pair [0,2,2,2,2] and [1,1,1,1,1] and output event 1. It takes only $40$ seconds to verify when $\epsilon= 1.34$, a little far away from the true lower bound. Altogether it takes around $1600$ seconds to assure the true bound, which is acceptable.

\hide{
As for the efficiency, it is clear that testing will take less running time than  verification approach along with
the growth of databases for the reason
that one only needs to search in a confined space while the other needs to explore the entire state space.
Our tool has limitations that we may encounter the problem of space explosion if the size of databases are too long, which is a common problem in model verification.
Then it may take much time to build the corresponding models and solve the SMT queries, whose amount is especially huge in differential privacy
generated by neighboring pairs.
In terms of time, the running time StatDP takes
is mainly related to the iteration times for selecting events and detecting counterexamples and
the size of input databases. Following usual setting of $50000$ times for event selections,
$200000$ times for counterexample detections and input databases of length $3-10$,
the average running time for each $\epsilon$ of mechanisms, that is, each dot in the graph, is $15-25$ seconds.
Meanwhile, the running time of our tool FAIER mainly depends on the size of databases and the length $k$ of observation
sequence. Since larger databases generate much more neighboring pairs in differential privacy and
larger $k$ results in more complex SMT query. When the size of databases are in length $3-5$ and $k$ is less than $12$,
our tool returns a query answer for each neighboring pair less than 1 second.
FAIER does not scale with length:  for length larger than $5$, FAIER does not terminate due to TIMEOUT (20 minutes).

Nevertheless, as for the precision for the privacy budget $\epsilon$, our tool  outperforms StatDP.
On the one hand, testing cannot offer precise results as indicated by the authors of StatDP:
\emph{``a typical feature of hypothesis tests as it becomes difficult to reject the null hypothesis when it is only slightly incorrect''}~\cite{DWWZK:18:DVDP}.
We illustrate this problem in Fig.~\ref{figure:3times}, which shows three testing results for Algorithm~\ref{algorithm:noisy-max}
with the same settings. The dashed vertical line is the  $\epsilon_0$ obtained by our tool.
It's clear to see that the p-value for a given value $\epsilon$ can differ in a large interval.
More importantly, in hypothesis testing neither can we reject a mechanism as $\epsilon$-differential privacy
if the corresponding p-value is above $0.05$ nor accept a mechanism as $\epsilon$-differential privacy
if the p-value is not ``close'' to $1$, which brings huge uncertainty for $\epsilon$ with p-value in (0.05,1).
Therefore, for $\epsilon$ with p-value in (0.05,1), hypothesis testing can't give right conclusion of $\epsilon$-differential privacy or not
with high confidence. However, our tool can exactly determine $\epsilon$-differential privacy for an algorithm.

\subsubsection{StatDP fails to find Counterexamples.}
On the other hand, we address another situation by running the tools on discrete Above Threshold algorithm (Algorithm~\ref{algorithm:above-threshold}).
As we have proved in Section~\ref{subsection:above-threshold}, the discrete Above Threshold algorithm
does not satisfy $\epsilon$-differential privacy for any value of $\epsilon$.
Our tool keeps reporting counterexample with longer length $k$ of observation sequence along with the increase of $\epsilon$.
However, when we implement the algorithm in StatDP, the tool surprisingly shows that there exists an $\epsilon$ such as $3$ that the algorithm
satisfies $3$-differential privacy (Fig.~\ref{figure:svt_2-3}), which contradicts our proof. At our first thought, the testing
approach might fail to find the best candidate event that could show the differential privacy is violated. So we find a counterexample with $\epsilon =3$
produced by our tool, which has an observation sequence of $\blank, 01, \bot,
12,\bot,12,\bot,12,\bot,01,\top$, corresponding to
$\bot^4\top$, and feed it back to StatDP, expecting that the tool could generate a privacy violation
result on this counterexample. However, the tool shows that the event, that is, the counterexample, can never be generated.
We make a careful examination of their source code and perform more testings, finally to find that it happens because the number of queries to test, are bounded by
the length of the input databases.
Specifically, in the Algorithm~\ref{algorithm:above-threshold}, the number of queries are not specified and can be independent
of the length of databases.
However, we need to specify the length of databases in StatDP and when we set $3$ to the length of databases,
the number of queries is $3$ at most and therefore it can never generate an event like $\bot^4\top$ with $5$ queries.
In order to make it clearer, we set the length of databases differently and get the results in the Fig.~\ref{figure:svt_2-3} and Fig.~\ref{figure:svt_10-15}.

It is clear to see that the privacy budgets differ a lot in two settings of different lengths of databases
in that when the lengths of input are longer, the testing approach assumes that the privacy budget is larger for the
reason that the examined event space is larger. Though, the results of testing are incorrect in the sense that
Algorithm~\ref{algorithm:above-threshold} should never satisfy $\epsilon$-differential privacy for any value of $\epsilon$.
And those graphs of p-value may mislead data curators into believing that the mechanism preserves some value of differential
privacy while it is actually not. Their results hold in the condition of databases in a certain length and queries with a certain amount.
In comparison, our tool can model infinite number of queries achieved by the loops in Fig.~\ref{figure:hmm-above-threshold},
with the parameter $k$ standing for the amount of posed queries.
Therefore, in terms of precision, we observe from this example that StatDP might draw wrong conclusions while FAIER is more precise.
}
\vspace{-5pt}
\subsection{Combining Verification and Testing}\label{sub:combine}

\hide{
\subsubsection{Counterexample Certification}
The process in Section~\ref{subsection:com} also shows a certification process for counterexample analysis.
Recall that we first generate a counterexample in our tool, specify it as the distinguished event in the
testing tool, testing reports that no such counterexample exists, we finally realize to increase
databases length and do more testings. Since the counterexample found by FAIER is correct and precise,
this process helps us locate the problem.
Moreover, we are able to do that in a reverse mode to verify the counterexample of testing since testing cannot get
too precise results, which has been partially explained in the previous section.
To be specific, testing returns a p-value for each $\epsilon$,
along with an event that generates this smallest p-value.
If the p-value is less than $0.05$, then the event can be regarded as an efficient counterexample
under which $\epsilon$-differential privacy is violated. However, everything becomes uncertain when the p-value is in the
interval of $(0.05,1)$. The events delivered by testing in this situation cannot be classified: they can be either \emph{real} or {spurious counterexamples}.
Fortunately, our tool can solve the problem in a way that we encode those events into observation
sequences and distributions, in order to compute the probabilities for the observation sequences
and check whether the condition of $\epsilon$-differential privacy is satisfied.
Take Fig.~\ref{figure:3times} as an example. The left part of the vertical dashed lines corresponds to all
the \emph{real counterexamples} while the right part corresponds to the \emph{spurious counterexamples}. In all the cases,
only one counterexample can be confirmed by testing (the one with p-value below $0.05$). Instead,
our tool can effectively check whether all events given by testing are real counterexamples or spurious ones.
Moreover, if the p-value is 1 for an $\epsilon$, i.e. with high confidence of mechanism satisfying $\epsilon$-differential privacy,
our tool FAIER can check if there is a counterexample for the given $\epsilon$ and further verify the mechanism.

\subsubsection{A Better Lower Bound}
The findings during experiments of these tools inspire us to combine verification and testing together to efficiently get a precise lower bound for $\epsilon$.
We follow the definition of lower bound for privacy budget in the paper \cite{BTDPM:18:dpfinder}:
the largest $\epsilon$ that an randomized algorithm is not $\epsilon$-differential privacy.
}

The findings during experiments inspire us to combine verification (FAIER) and testing (DP-Sniper, StatDP) together to efficiently make use of each tool. First, we can see that the witnesses found by FAIER and DP-Sniper are the same one. 
Actually, if heuristic searching strategies for input pairs are used, i.e., Table.~\ref{table:InputPattern} used in DP-Sniper and StatDP, FAIER will quickly find the violation pairs, which saves huge time in the occasions of privacy violations. 
Second, since the witness returned by DP-Sniper is the optimal input pair that maximize the probability difference, FAIER can precisely verify whether the ``optimal'' witness satisfies $\epsilon$-differential privacy, whereby FAIER will more likely to find the true lower bound as $\epsilon$ increase in short time. Third, since StatDP returns an imprecise result quickly given an $\epsilon$, we can combine StatDP and FAIER to efficiently get a precise lower bound. The pseudo-code is in Algorithm~\ref{algorithm:lowerbound}.

Algorithm~\ref{algorithm:lowerbound} first feeds mechanism $M$ as input to the testing tool StatDP,
to obtain an interval $I$ whose left end point is $\epsilon$ with p-value $<0.05$
and right end point with p-value $=1$. StatDP can conclude if p-value$<0.05$, the mechanism doesn't satisfy $\epsilon$-differential privacy with high confidence
and if p-value$=1$, the mechanism satisfies for sure. However, for other p-values, StatDP
is not confident to give useful conclusions.
Here is where our tool can work out ---
FAIER can determine whether $M$ satisfies $\epsilon$-differential privacy, given any $\epsilon$.
As a result we can combine to efficiently get arbitrarily close to the lower bound $\epsilon$ wrt. a given precision by binary search.
For instance, we apply StatDP on Algorithm~\ref{algorithm:noisy-max}
to get an interval $I = [1.34, 1.38]$ according to the p-value graph, and then apply our tool FAIER to verify $\epsilon$-differential privacy.
Consequently, our tool reports the lower bound is $1.372$ (up to a precision of 0.001).

\vspace{-15pt}
\begin{footnotesize}
\begin{algorithm}
  \begin{algorithmic}[1]
    \Procedure{Compute lower bound}{Mechanism M}
      \State{Use StatDP with input M to get an interval I}
      \Comment{the left end point is an $\epsilon$ with p-value$<0.05$
      and the right one is one with p-value$=1$}
      \State{Apply binary search on I, in each iteration the value is $\epsilon$}
      \Repeat
      \State{Use FAIER with input M and $\epsilon$}
      \If{result is SAT}
      \Comment{not satisfy $\epsilon$-differential privacy}
          \State{left end point = $\epsilon$}
      \Else \Comment{satisfy $\epsilon$-differential privacy}
          \State{right end point = $\epsilon$}
      \EndIf
      \Until{reaching required precision}
      \State{\Return {$\epsilon$} }
    \EndProcedure
  \end{algorithmic}
  \caption{Pseudo-code to compute the lower bound}
  \label{algorithm:lowerbound}
\end{algorithm}
\end{footnotesize}

\section{Related Work}
\label{section:related-work}

\hide{
\paragraph{Pufferfish privacy} The definition of Pufferfish privacy allows correlations among data items~\cite{KM:14:PFMPD},
which is an elegant generalization of differential privacy~\cite{D:06:DP}.
Since then many follow-up works have explored the topic.
Song et al.~\cite{SWC:17:ppmcd} provided the first efficient mechanism to apply Pufferfish Privacy,
which is the generalization of continuous Laplace mechanism for differential privacy.
To the best of our knowledge, our algorithm is the first one that could automatically verify discrete Pufferfish privacy mechanisms.}
\vspace{-8pt}
\paragraph{Methods of proving/testing differential privacy.}
Barthe et al.~\cite{BKO:12:prrdp,BKO:12a:prrdp} proposed to prove differential privacy at the beginning. Then a number of work~\cite{BGGHS:16:PDPPC,BGHJP:16:PLTDP,AH:17:SCPDP} extended probabilistic relational Hoare logic
and applied approximate probabilistic couplings between programs on adjacent inputs.
They successfully proved differential privacy for several algorithms,
but cannot disprove
privacy. Zhang et al.~\cite{ZK:17:lightdp,WDW:19:shadowdp,WDKZ:20:CDP} proposed to apply randomness alignment to evaluate privacy cost
and implemented CheckDP that could rewrite classic privacy mechanisms involving Laplacian noise to verify differential privacy. 
Bichsel et.al~\cite{BTDPM:18:dpfinder},
Ding et.al~\cite{DWWZK:18:DVDP} and Zhang et.al~\cite{ZRH:20:DPCheck} used
testing and searching to find violations for differential privacy mechanisms, the results of which may be too coarse or imprecise. 
Liu et al.~\cite{LWZ:18:MCDPP} chose Markov chains and Markov decision processes to model deferentially private mechanisms and verify privacy properties in extended probabilistic temporal logic. McIver et al.~\cite{MM:19:PPADP} applied Quantitative Information Flow to analyze Randomized response mechanism in differential privacy.
We note that all the automated tools above for proving or testing differential privacy, plus ours, have not been well studied in privacy mechanisms with considerably large data sets.
\vspace{-8pt}
\paragraph{Complexity in verifying differential privacy.}

Gaboardi et al.~\cite{GNP:20:CVLPDP} studied the problem of verifying differential privacy for probabilistic loop-free programs.
They showed that to decide $\epsilon$-differential privacy is $\mathbf{coNP^{\#P}}$-complete
and to approximate the level of differential privacy is both $\mathbf{NP}$-hard and $\mathbf{coNP}$-hard.
Barthe et al.~\cite{BCJSV:20:ddppfio} first proved that checking differential privacy is undecidable.
The difference with our work lies in that we study verification problems for mechanisms modeled in HMMs in Pufferfish privacy. Chistikov et al.~\cite{CKMP:2020:LMC} proved that the big-$O$ problem for labeled Markov chains (LMCs) is undecidable, which is similar to deciding the ratio of two probabilities in differential privacy. Though, their proof does not apply here since HMMs in our paper do not have the same non-deterministic power as LMCs.


\hide{
\vspace{-15pt}
\subsubsection{Acknowledgements.}
We would like to thank the anonymous reviewers for their
valuable suggestions and comments about this paper. The work is
supported by Ministry of Science and Technology of Taiwan
under the Grant Number 108-2221-E-001-010-MY3; 
the Data Safety and Talent Cultivation Project
AS-KPQ-109-DSTCP.

\ldp{more projects to be added..}
}
\section*{Appendix 1}

\label{section:pc}
\begin{proof}
  In order to satisfy Pufferfish privacy in hidden Markov model, we have to decide whether
  expressions (\ref{max1}) and (\ref{max2}) are no more than $0$.
  Let's just simplify the problem by only having one initial distribution pair to compare
  so that we only need to find the observation sequence.
  We will show the problem to find the maximal value is NP-hard by
  a reduction from the classic Boolean Satisfiability Problem (SAT), which is known to be NP-hard. To be specific, given an arbitrary
  formula in conjuncted normal form, we construct a corresponding hidden Markov model under Pufferfish privacy framework,
  such that the formula is satisfiable if and only if the expressions (\ref{max1}) and (\ref{max2}) both take the maximal value $0$.



  Assume we have a formula $F(x_1,\ldots,x_n)$ in conjuncted normal form,
  with $n(n>=3)$ variables and $m$ clauses, $C_1,\ldots,C_m$. We shall construct a hidden Markov model $H = (K, \Obs, o)$
  such that with $\epsilon = \ln (4)$, expressions (\ref{max1}) and (\ref{max2}) will take maximal value $0$
  if and only if the formula $F(x_1,\ldots,x_n)$ is satisfiable.

  \paragraph{Construction.} The construction of model is similar to that in \cite{PCT:87:CMDP}. We first describe the Markov Chain $K =
  (S, p)$. $S$ contains a state group $A$ with six states $A_{ij}$, $A'_{ij}$, $T_{A ij}$, $T'_{Aij}$, $F_{Aij}$, $F'_{Aij}$ and
  a state group $B$ with six states $B_{ij}$, $B'_{ij}$, $T_{Bij}$, $T'_{Bij}$, $F_{Bij}$, $F'_{Bij}$
  for each clause $C_i$ and variable $x_j$. Besides, there are $4m$ states $A_{i,n+1}$, $A'_{i,n+1}$, $B_{i,n+1}$, $B'_{i,n+1}$ for each clause $C_i$.
  The transition distribution $p$ is as follows. For group $A$, there are two transitions with same probability $\frac{1}{2}$ leading from
  state $A_{ij}$ to $T_{Aij}$ and $F_{Aij}$ respectively; similarly there are two transitions leading with probability $\frac{1}{2}$
  from $A'_{ij}$ to $T'_{Aij}$ and $F'_{Aij}$. There's only one transition leading with certainty from $T_{Aij}$, $F_{Aij}$, $T'_{Aij}$, $F'_{Aij}$,
  to $A_{i,j+1}$, $A_{i,j+1}$, $A'_{i,j+1}$, $A'_{i,j+1}$ respectively with two exceptions: If $x_j$ appears positively in $C_i$,
  the transition from $T'_{Aij}$ is to $A_{i,j+1}$ instead of $A'_{i,j+1}$; and if $x_j$ appears negatively, the transition from
  $F'_{Aij}$ is to $A_{i,j+1}$. For the state group $B$, all the transitions imitate that in group $A$ only with different state names.
  For instance, there are two transitions leading with same probability $\frac{1}{2}$ from state $B_{ij}$ to $T_{Bij}$ and $F_{Bij}$ and so on.

  Next we describe the observations $\Obs$ and the observation distribution. In state
  $A_{ij}$, $A'_{ij}$, $B_{ij}$, $B'_{ij}$ with $1\leq j \leq n$, one can observe $X_j \in \Obs$ with certainty.
  In state $T_{Aij}$, $T'_{Aij}$, $T_{Bij}$, $T'_{Bij}$ with $1\leq j \leq n$, one can only observe $T_j \in \Obs$;
  similarly, the sole observation $F_j \in \Obs$ can be observed in state $F_{Aij}$, $F'_{Aij}$, $F_{Bij}$, $F'_{Bij}$ with $1\leq j \leq n$.
  In state $A_{i,n+1}$, we have probability $\frac{4}{5}$ to observe $\top \in \Obs$ and $\frac{1}{5}$  to observe $\bot \in \Obs$;
  while in state $B_{i,n+1}$, we have probability $\frac{1}{5}$ to observe $\top$ and $\frac{4}{5}$  to observe $\bot$.
  In state $A'_{i,n+1}$ and $B'_{i,n+1}$, there are equal probabilities of $\frac{1}{2}$ observing $\top$ and $\bot$.

  \begin{figure}
  \center
    \begin{tikzpicture}[xscale = 0.5, ->,>=stealth',shorten >=1pt,auto,node
      distance=1cm,
      node/.style={circle,draw,inner sep=0pt,minimum size=22pt}]

      \node[node, label={[shift={(-.55,-.3)}]$X_1$}]
            (AA11) at  ( -8,    3) { $A'_{11}$ };
      \node[node, label={[shift={(-.55,-.3)}]$T_1$}]
            (TTA11) at ( -9,  1.5) { $T'_{A11}$ };
      \node[node, label={[shift={(-.55,-.3)}]$F_1$}]
            (FFA11) at ( -7,  1.5) { $F'_{A11}$ };

      \node[node, label={[shift={(-.55,-.3)}]$X_2$}]
            (AA12) at  ( -8,    0) { $A'_{12}$ };
      \node[node, label={[shift={(-.55,-.3)}]$T_2$}]
            (TTA12) at ( -9, -1.5) { $T'_{A12}$ };
      \node[node, label={[shift={(-.55,-.3)}]$F_2$}]
            (FFA12) at ( -7, -1.5) { $F'_{A12}$ };

      \node[node,
            label={[shift={(-0.5,-1.5)}]$\top(\frac{1}{2}),\bot(\frac{1}{2})$}]
            (AA13) at  ( -8,   -3) { $A'_{13}$ };

      \node[node, label={[shift={( .55,-.3)}]$X_1$}]
            (A11) at  ( -4,    3) { $A_{11}$ };
      \node[node, label={[shift={( -.55,-.3)}]$T_1$}]
            (TA11) at ( -5,  1.5) { $T_{A11}$ };
      \node[node, label={[shift={( -.55,-.3)}]$F_1$}]
            (FA11) at ( -3,  1.5) { $F_{A11}$ };

      \node[node, label={[shift={( .55,-.3)}]$X_2$}]
            (A12) at  ( -4,    0) { $A_{12}$ };
      \node[node, label={[shift={( -.55,-.3)}]$T_2$}]
            (TA12) at ( -5, -1.5) { $T_{A12}$ };
      \node[node, label={[shift={( -.55,-.3)}]$F_2$}]
            (FA12) at ( -3, -1.5) { $F_{A12}$ };

      \node[node,
            label={[shift={(-0.5,-1.5)}]$\top(\frac{4}{5}),\bot(\frac{1}{5})$}]
            (A13) at ( -4,   -3) { $A_{13}$ };

      \node[node, label={[shift={(-.55,-.3)}]$X_1$}]
            (AA21) at  (  0,    3) { $A'_{21}$ };
      \node[node, label={[shift={(-.55,-.3)}]$T_1$}]
            (TTA21) at (  -1,  1.5) { $T'_{A21}$ };
      \node[node, label={[shift={(-.55,-.3)}]$F_1$}]
            (FFA21) at (  1,  1.5) { $F'_{A21}$ };

      \node[node, label={[shift={(-.55,-.3)}]$X_2$}]
            (AA22) at  (  0,    0) { $A'_{22}$ };
      \node[node, label={[shift={(-.55,-.3)}]$T_2$}]
            (TTA22) at (  -1, -1.5) { $T'_{A22}$ };
      \node[node, label={[shift={(-.55,-.3)}]$F_2$}]
            (FFA22) at (  1, -1.5) { $F'_{A22}$ };

      \node[node,
            label={[shift={( -0.5,-1.5)}]$\top(\frac{1}{2}),\bot(\frac{1}{2})$}]
            (AA23) at  (  0,   -3) { $A'_{23}$ };

      \node[node, label={[shift={( .55,-.3)}]$X_1$}]
            (A21) at  (  4,    3) { $A_{21}$ };
      \node[node, label={[shift={( -.55,-.3)}]$T_1$}]
            (TA21) at (  3,  1.5) { $T_{A21}$ };
      \node[node, label={[shift={( -.55,-.3)}]$F_1$}]
            (FA21) at (  5,  1.5) { $F_{A21}$ };

      \node[node, label={[shift={( .55,-.3)}]$X_2$}]
            (A22) at  (  4,    0) { $A_{22}$ };
      \node[node, label={[shift={( -.55,-.3)}]$T_2$}]
            (TA22) at (  3, -1.5) { $T_{A22}$ };
      \node[node, label={[shift={( -.55,-.3)}]$F_2$}]
            (FA22) at (  5, -1.5) { $F_{A22}$ };

      \node[node,
            label={[shift={( -0.5,-1.5)}]$\top(\frac{4}{5}),\bot(\frac{1}{5})$}]
            (A23) at (  4,   -3) { $A_{23}$ };

      \path
      (AA11)  edge (TTA11)
      (AA11)  edge (FFA11)
      (TTA11) edge[very thick] (A12)
      (FFA11) edge[very thick] (AA12)

      (AA12)  edge (TTA12)
      (AA12)  edge (FFA12)
      (TTA12) edge[very thick] (AA13)
      (FFA12) edge[very thick] (A13)

      (A11)  edge (TA11)
      (A11)  edge (FA11)
      (TA11) edge[very thick] (A12)
      (FA11) edge[very thick] (A12)

      (A12)  edge (TA12)
      (A12)  edge (FA12)
      (TA12) edge[very thick] (A13)
      (FA12) edge[very thick] (A13)

      (AA21)  edge (TTA21)
      (AA21)  edge (FFA21)
      (TTA21) edge[very thick] (AA22)
      (FFA21) edge[very thick] (A22)

      (AA22)  edge (TTA22)
      (AA22)  edge (FFA22)
      (TTA22) edge[very thick] (AA23)
      (FFA22) edge[very thick] (AA23)

      (A21)  edge (TA21)
      (A21)  edge (FA21)
      (TA21) edge[very thick] (A22)
      (FA21) edge[very thick] (A22)

      (A22)  edge (TA22)
      (A22)  edge (FA22)
      (TA22) edge[very thick] (A23)
      (FA22) edge[very thick] (A23)
      ;
      \end{tikzpicture}
    \caption{Construction for $(x_1 \vee \neg x_2) \wedge \neg x_1$}
    \label{figure:proof-construction}
  \end{figure}
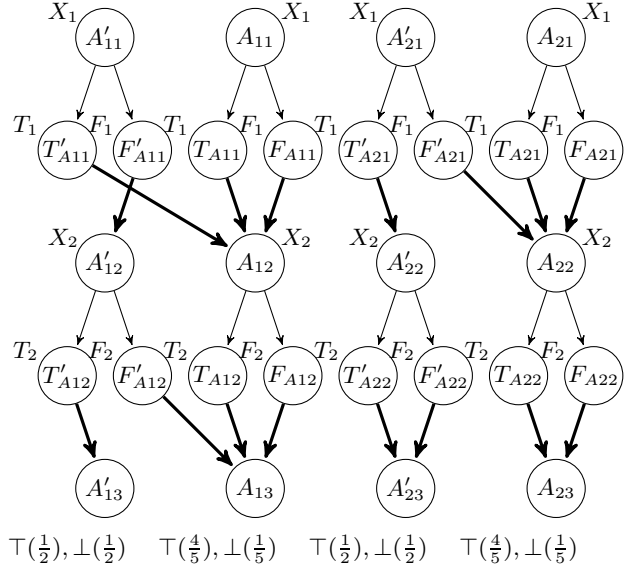

  Fig.~\ref{figure:proof-construction} illustrates a part of the
  construction for the CNF formula $(x_1 \vee \neg x_2) \wedge \neg
  x_1$. State names are shown inside circles. Thin arrows represent
  transitions with probability $\frac{1}{2}$; thick arrows represent
  transitions with probability $1$. Observation distributions are
  shown outside each states. For instance, $X_1$ is observed with
  probability $1$ at the state $A'_{11}$. At the state $A'_{13}$,
  $\top$ and $\bot$ are observed with probability $\frac{1}{2}$
  each.

  In the figure, the left-hand side corresponds to the clause $x_1
  \vee \neg x_2$. Since the variable $x_1$ appears positively in the
  clause, there is a transition from $T'_{A11}$ to $A_{12}$ with
  probability $1$ according to the construction. Similarly, another
  transition from $F'_{A12}$ to $A_{13}$ with probability $1$ is
  needed for the negative occurrence of the variable $x_2$ in the
  clause. For the right-hand side corresponding to the clause $\neg
  x_1$, a transition from $F'_{A21}$ to $A_{22}$ with probability $1$
  is added.

  The construction for the state group $B$ is almost identical except
  the observation distributions on the states $B_{13}$ and $B_{23}$.
  At the states $B_{13}$ and $B_{23}$, $\top$ and $\bot$ can be
  observed with probabilities $\frac{1}{5}$ and $\frac{4}{5}$
  respectively. The construction for the state group $B$ is not shown
  in the figure for brevity.

  Then we describe the Pufferfish privacy scenario in this hidden Markov model. Assume that according to
  prior knowledge $\bbfD$ and discriminative secrets $\bbfS_{\textmd{pairs}}$, we only have
  one initial distribution pair $D_1$ and $D_2$ to compare. $D_1$ induces a uniform distribution,
  to start from each member in the state set $\{A'_{i1}\}$ with $1 \leq i \leq m$, whose probability is $\frac{1}{m}$.
  Similarly, in $D_2$, the probability starting from each member in the state set $\{B'_{i1}\}$ is also $\frac{1}{m}$ with $1 \leq i \leq m$.
  We set the parameter $\epsilon = \ln (4)$.

  \paragraph{Reduction.} The intuition is that starting from state $A'_{i1}$ or $B'_{i1}$,
  the clause $C_i$ is chosen and then the assignment of each variable will be considered one by one
  in this clause. Once the assignment of a variable $x_j$ makes $C_i$ satisfied, immediately
  state $A_{i,j+1}$ or $B_{i,j+1}$ is reached. So at last if state $A'_{i,n+1}$ or $B'_{i,n+1}$
  is reached, it means that the clause $C_i$ is not satisfied under this assignment. Now, we
  claim that  $\Pr (\calM (D_1) = \overline{\omega}) - 4 \times \Pr (\calM (D_2) = \overline{\omega})$ takes the maximal value $0$
  if and only if $\overline{\omega}$ is the observation sequence $X_1A_1X_2\ldots A_n \top$ such that
  formula $F(x_1,\ldots,x_n)$ is satisfied under assignment with $A_i \in \{T_i,F_i\}$ for each variable $x_i$
  (Similar analysis applies for $\Pr (\calM (D_2) = \overline{\omega}) - 4 \times \Pr (\calM (D_1) = \overline{\omega})$ except that it takes the
   maximal value $0$ with $\bot$ as the last observation).

  We argue that $0$ is the maximal value.
  It's easy to see that if we take an arbitrary observation sequence $\overline{\omega} = X_1A_1X_2\ldots $,
  as long as $\top$ or $\bot$ hasn't been observed, $\Pr (\calM (D_1) = \overline{\omega}) -  4 \times \Pr (\calM (D_2) = \overline{\omega}) < 0$.
  That's because the state group $B$ just imitate the state group $A$ before reaching the state
  $B_{i,n+1}$ and $B'_{i,n+1}$. Thus the maximal value must be less than 0 or be obtained after we observe $\top$
  or $\bot$.

  Then we consider $\overline{\omega}=X_1A_1X_2\ldots A_n \top$. Note that if $C_i$ is satisfied under observation $\overline{\omega}$, we start from $A'_{i1}$ and $B'_{i1}$ both with
  probability $\frac{1}{m}$, finally reaching $A_{i,n+1}$ and $B_{i,n+1}$ with probabilities $2^{-n} \times \frac{1}{m} \times \frac{4}{5}$
  and $2^{-n} \times \frac{1}{m} \times \frac{1}{5}$ respectively;
  if $C_i$ is not satisfied, we finally reach $A'_{i,n+1}$ and $B'_{i,n+1}$  with equal probabilities of $2^{-n}\times \frac{1}{m} \times \frac{1}{2}$.
  Thus a satisfied clause will contribute $2^{-n} \times \frac{1}{m} \times \frac{4}{5} - 4 \times 2^{-n} \times \frac{1}{m} \times \frac{1}{5} = 0$ to
  the result; while if some clause is not satisfied, $\Pr (\calM (D_1) = \overline{\omega})- 4\times \Pr (\calM (D_2) = \overline{\omega})$ is strictly less than $0$.
  Therefore, if we choose a observation sequence ended with $\top$ such that all the
  clauses are satisfied, $\Pr (\calM (D_1) = \overline{\omega})- 4\times \Pr (\calM (D_2) = \overline{\omega})$ will take the maximal value 0.
  If we consider $\overline{\omega}=X_1A_1X_2\ldots A_n \bot$, similar analysis concludes that $\Pr (\calM (D_1) = \overline{\omega}) - 4 \times \Pr (\calM (D_2) = \overline{\omega})$
  will be strictly less than $0$.
  This indicates that $0$ is the maximal value of $\Pr (\calM (D_1) = \overline{\omega}) - 4 \times \Pr (\calM (D_2) = \overline{\omega})$
  among all the observation sequences.

  Finally from the process above, it's easy to see that
  $\Pr (\calM (D_1) = \overline{\omega}) - 4 \times \Pr (\calM (D_2) = \overline{\omega})$ takes the maximal value $0$
  if and only if $F(x_1,\ldots,x_n)$ is satisfied under observation sequence $\overline{\omega}=X_1A_1X_2\ldots A_n \top$
  with assignment $A_i \in \{T_i,F_i\}$ for each variable $x_i$.
  Since determining whether Pufferfish privacy is preserved
  is equivalent to determining whether the maximal value is above $0$,
  we prove that the general problem for $\epsilon$-Pufferfish privacy is NP-hard.
\end{proof} 

\section*{Appendix 2}
\label{subsection:correctness}
\hide{Experiments of 7.2, 7.3:
The initial distributions are ($p_0$ is the probability of contracting disease A):

$\pi = (0,$ $\frac{(1-p_1)^2(1-p_2)^2 2p_0(1-p_0)}{2p_0(1-p_0)+p_0^2},$
$\frac{(1-p_1)^2(1-p_2)^2p_0^2}{2p_0(1-p_0)+p_0^2},$ $0,$
$\frac{2p_1(1-p_1)(1-p_2)^22p_0(1-p_0)}{2p_0(1-p_0)+p_0^2},$
$\frac{2p_1(1-p_1)(1-p_2)^2p_0^2}{2p_0(1-p_0)+p_0^2},$
$0,$ $\frac{p_1^2(1-p_2)^22p_0(1-p_0)}{2p_0(1-p_0)+p_0^2},$
$\frac{p_1^2(1-p_2)^2p_0^2}{2p_0(1-p_0)+p_0^2},$ $0,$
$\frac{2p_2(1-p_2)(1-p_1)^22p_0(1-p_0)}{2p_0(1-p_0)+p_0^2},$
$\frac{2p_2(1-p_2)(1-p_1)^2p_0^2}{2p_0(1-p_0)+p_0^2},$ $0,$
$\frac{4p_1p_2(1-p_1)(1-p_2)2p_0(1-p_0)}{2p_0(1-p_0)+p_0^2},$
$\frac{4p_1p_2(1-p_1)(1-p_2)p_0^2}{2p_0(1-p_0)+p_0^2},$ $0,$
$\frac{2p_1^2p_2(1-p_2)2p_0(1-p_0)}{2p_0(1-p_0)+p_0^2},$
$\frac{2p_1^2p_2(1-p_2)p_0^2}{2p_0(1-p_0)+p_0^2},$ $0,$
$\frac{p_2^2(1-p_1)^22p_0(1-p_0)}{2p_0(1-p_0)+p_0^2},$
$\frac{p_2^2(1-p_1)^2p_0^2}{2p_0(1-p_0)+p_0^2},$ $0,$
$\frac{2p_2^2p_1(1-p_1)2p_0(1-p_0)}{2p_0(1-p_0)+p_0^2},$
$\frac{2p_2^2p_1(1-p_1)p_0^2}{2p_0(1-p_0)+p_0^2},$ $0,$
$\frac{p_1^2p_2^22p_0(1-p_0)}{2p_0(1-p_0)+p_0^2},$
$\frac{p_1^2p_2^2p_0^2}{2p_0(1-p_0)+p_0^2},$ $0,$
$0...)$.

$\tau = (\frac{(1-p_1)^2(1-p_2)^2(1-p_0)^2}{2p_0(1-p_0)+(1-p_0)^2},$ $\frac{(1-p_1)^2(1-p_2)^22p_0(1-p_0)}{2p_0(1-p_0)+(1-p_0)^2},$ $0,$
$\frac{2p_1(1-p_1)(1-p_2)^2(1-p_0)^2}{2p_0(1-p_0)+(1-p_0)^2},$
$\frac{2p_1(1-p_1)(1-p_2)^22p_0(1-p_0)}{2p_0(1-p_0)+(1-p_0)^2},$ $0,$
$\frac{p_1^2(1-p_2)^2(1-p_0)^2}{2p_0(1-p_0)+(1-p_0)^2},$
$\frac{p_1^2(1-p_2)^22p_0(1-p_0)}{2p_0(1-p_0)+(1-p_0)^2},$ $0,$
$\frac{(1-p_1)^22p_2(1-p_2)(1-p_0)^2}{2p_0(1-p_0)+(1-p_0)^2},$
$\frac{(1-p_1)^22p_2(1-p_2)2p_0(1-p_0)}{2p_0(1-p_0)+(1-p_0)^2},$ $0,$
$\frac{4p_1(1-p_1)p_2(1-p_2)(1-p_0)^2}{2p_0(1-p_0)+(1-p_0)^2},$
$\frac{4p_1(1-p_1)p_2(1-p_2)2p_0(1-p_0)}{2p_0(1-p_0)+(1-p_0)^2},$ $0,$
$\frac{p_1^22p_2(1-p_2)(1-p_0)^2}{2p_0(1-p_0)+(1-p_0)^2},$
$\frac{p_1^22p_2(1-p_2)2p_0(1-p_0)}{2p_0(1-p_0)+(1-p_0)^2},$ $0,$
$\frac{(1-p_1)^2p_2^2(1-p_0)^2}{2p_0(1-p_0)+(1-p_0)^2},$
$\frac{(1-p_1)^2p_2^22p_0(1-p_0)}{2p_0(1-p_0)+(1-p_0)^2},$ $0,$
$\frac{2p_1(1-p_1)p_2^2(1-p_0)^2}{2p_0(1-p_0)+(1-p_0)^2},$
$\frac{2p_1(1-p_1)p_2^22p_0(1-p_0)}{2p_0(1-p_0)+(1-p_0)^2},$ $0,$
$\frac{p_1^2p_2^2(1-p_0)^2}{2p_0(1-p_0)+(1-p_0)^2},$
$\frac{p_1^2p_2^22p_0(1-p_0)}{2p_0(1-p_0)+(1-p_0)^2},$ $0,$
$0...)$.

New proof:
}
In the literature, if the perturbed query result is smaller than the perturbed
threshold, noise will be added into next query, the result of which is uncertain.
Thus, nondeterminism is required here to choose the next query and ~\cite{LWZ:18:MCDPP} uses
a Markov decision process to model the algorithm. In order to model nondeterminism in an HMM,
we assign equal probabilities to return to all the possible queries. For instance,
from the state $\tilde{t}_1\tilde{r}_0$, the probabilities of going to states
$\tilde{t}_1\tilde{r}_0$, $\tilde{t}_1\tilde{r}_1$ and $\tilde{t}_1\tilde{r}_2$ are all $\frac{1}{3}$. Although this is slightly different from Algorithm~\ref{algorithm:above-threshold},
we will prove using this model to avoid nondeterministic choices, whether Algorithm~\ref{algorithm:above-threshold}
satisfies $\epsilon-$differential privacy can still be verified. Before that,
we first state the consistency of the outputs executed in the algorithm and the observation sequences in the model.

\begin{lemma}\label{lemma1}
Assume that there are two neighboring databases, $\od_1$ and $\od_2$, along with queries $f_i$ and threshold $t$ given as input of Algorithm~\ref{algorithm:above-threshold} and
the output is $A_n=a_1a_2...a_n = \bot\bot...\top$ with $n\ge 1$. Then there is an initial distribution pair $d_1$ and $d_2$ and an one-to-one mapping observation sequence $o_k$
such that $(\frac{1}{3})^{2n+1} \Pr_a(A_n|\od_1) = \Pr_m(o_n|d_1)$ and $(\frac{1}{3})^{2n+1} \Pr_a(A_n|\od_2) = \Pr_m(o_n|d_2)$, where $\Pr_a$ denotes the probability of getting the outputs in Algorithm~\ref{algorithm:above-threshold} and $\Pr_m$ denotes the probability of getting the observation sequence in the hmm model in  Fig.~\ref{figure:hmm-above-threshold}.
\end{lemma}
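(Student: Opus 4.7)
The plan is to construct the initial distributions $d_1, d_2$ and the observation sequence $o_n$ explicitly from the algorithmic input/output, and then to match $\Pr_m(o_n \mid d_1)$ with $\Pr_a(A_n \mid \od_1)$ by decomposing along HMM paths. I will proceed by induction on $n$, with the inductive step exploiting that extending the algorithmic output by one more step (one more $\bot$ before the final $\top$) adds exactly two HMM steps, namely one uniform transition back to an $r$-state and one synchronization emission, each contributing a factor of $\tfrac{1}{3}$.

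First, I set $d_1$ to be the Dirac distribution concentrated on the top-copy initial state $t_t r_{f_1(\od_1)}$, where $t$ is the input threshold and $f_1(\od_1)$ is the first query result on $\od_1$; analogously $d_2$ is concentrated on $t_t r_{f_1(\od_2)}$. For the observation sequence I take
\[
o_n = \blank\, \sigma_1\, a_1\, \sigma_2\, a_2\, \cdots\, \sigma_n\, a_n,
\]
where each $\sigma_i$ is the synchronization symbol $f_i(\od_1) f_i(\od_2)$. This is a legal observation by the model's construction, since $|f_i(\od_1) - f_i(\od_2)| \leq 1$ for neighboring databases with counting queries. The sequence $o_n$ has length exactly $2n+1$.

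Next, I enumerate the top-copy HMM paths from $d_1$ that produce $o_n$. The emission constraints force each such path to visit exactly the states
\[
t_t r_{f_1(\od_1)},\; \tilde{t}_k r_{f_1(\od_1)},\; \tilde{t}_k \tilde{r}_{l_1},\; \tilde{t}_k r_{f_2(\od_1)},\; \tilde{t}_k \tilde{r}_{l_2},\; \ldots,\; \tilde{t}_k r_{f_n(\od_1)},\; \tilde{t}_k \tilde{r}_{l_n},
\]
where the noisy threshold $k$ and each noisy result $l_i$ range over values consistent with $a_i \in \{\bot,\top\}$ (i.e.\ $l_i < k$ for $i < n$ and $l_n \geq k$). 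No other paths contribute: the synchronization symbol $\sigma_i$ is emitted only at states whose top index equals $f_i(\od_1)$, and $\blank$ only at an initial state. Applying equation~(\ref{eqn:prob-obs-state}) to each such path and summing, the joint probability decomposes into (i) the threshold-noise probability for $t \mapsto k$, (ii) the $n$ result-noise probabilities for $f_i(\od_1) \mapsto l_i$, (iii) $\tfrac{1}{3}$-uniform transitions from each $\tilde{t}_k\tilde{r}_{l_i}$ back to $\tilde{t}_k r_{f_{i+1}(\od_1)}$, and (iv) $\tfrac{1}{3}$ sync-emission factors at each $\tilde{t}_k r_{f_i(\od_1)}$. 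Summing contributions (i) and (ii) over admissible $(k, l_1, \ldots, l_n)$ reproduces exactly $\Pr_a(A_n \mid \od_1)$ by the algorithm's noise decomposition, while the remaining $\tfrac{1}{3}$-factors collect into the claimed scaling constant. The identical argument with $\od_2, d_2$ yields the second equation.

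The main obstacle is the precise accounting of all $\tfrac{1}{3}$-factors to confirm the exponent $2n+1$: this requires distinguishing, at each of the $2n+1$ states on the path, whether each $\tfrac{1}{3}$ arises from an outgoing uniform transition, from the state's observation distribution, or from the way the initial distribution encodes the input. A secondary obstacle is establishing the claimed one-to-one correspondence between algorithmic executions producing $A_n$ and HMM paths producing $o_n$; this rests on the design of the synchronization observations $\sigma_i$, which jointly encode both databases' query results, so that $o_n$ uniquely pins down the deterministic portion of the trajectory while leaving only the intrinsic noise variables $(k, l_1, \ldots, l_n)$ to be summed over.
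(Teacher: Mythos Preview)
Your proposal is correct and follows essentially the same route as the paper: both construct $d_1,d_2$ as Dirac distributions on $t_t r_{f_1(\od_\ell)}$, take $o_n = \blank\,\sigma_1\,a_1\cdots\sigma_n\,a_n$ with $\sigma_i = f_i(\od_1)f_i(\od_2)$, and argue by induction on $n$, picking up two extra factors of $\tfrac{1}{3}$ (one from the uniform back-transition, one from the sync emission) at each inductive step. Your concern about the exact exponent is well placed---the paper's own base case yields a single factor $\tfrac{1}{3}$ for $n=1$, which together with the inductive step gives $(\tfrac{1}{3})^{2n-1}$ rather than the $(\tfrac{1}{3})^{2n+1}$ stated in the lemma; this discrepancy is immaterial for the downstream use (the constant cancels in the privacy ratio), but your instinct to audit the count carefully is sound.
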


\begin{proof}
We prove by induction on the length of the output $A_n = \bot\bot...\top$ with $n$ symbols.
Assume the query results of neighboring databases $\od_1$ and $\od_2$
are $i_1,i_2,...$ and $j_1,j_2,...$, with $|i_k - j_k|<= 1$ for any fixed $k$.

Base case: $n=1$. If $A_1=\top$, the algorithm halts after comparing the first perturbed query with the perturbed threshold.
Naturally, there's only one state $t_{t}r_{i_1}$ with probability $1$ and the others with probability $0$ in the distribution $d_1$,
and only one state $t_{t}r_{j_1}$ with probability $1$ and the others with $0$ in $d_2$. Starting from these initial distributions,
we first observe a $\blank$ and then transit to the distribution with states $\tilde{t}_{t'}r_{i_1}$  and $\tilde{t}_{t'}r_{j_1}$
having non-zero probabilities, where $t'$ can be all possible values of perturbed threshold. The only observation shared by theses
states is $i_1j_1$, with observing probability $\frac{1}{3}$. Then queries are then added by noise and we come to a new distribution of perturbed query results and threshold $\tilde{t}_{t'}\tilde{r}_{k'}$, where $k'$ can be all possible values of perturbed query result.
This time we only choose states where perturbed query results are higher than the perturbed threshold and all these states share an observation of $\top$ with observing probability $1$.
Thus, under the observation sequence $o_1= \blank, i_1j_1, \top$, we follow the steps of Algorithm~\ref{algorithm:above-threshold} to make transitions in the model
and considering the observation probabilities, we can directly get $\frac{1}{3} \Pr_a(A_1|\od_1) = \Pr_m(o_1|d_1)$ and $\frac{1}{3} \Pr_a(A_1|\od_2) = \Pr_m(o_1|d_2)$.
Note that if $A_1=\bot$, then under the observation sequence $o_1= \blank, i_1j_1, \bot$, we can still conclude in a similar way that  $\frac{1}{3} \Pr_a(A_1|\od_1) = \Pr_m(o_1|d_1)$ and $\frac{1}{3} \Pr_a(A_1|\od_2) = \Pr_m(o_1|d_2)$.

Induction step: Assume we have $(\frac{1}{3})^{2n+1} \Pr_a(A_n|\od_1) = \Pr_m(o_n|d_1)$ and $(\frac{1}{3})^{2n+1}$ $\Pr_a(A_n|\od_2)$ $=\Pr_m$$($$o_n$$|$$d_2)$
with $n$ symbols in $A_n = \bot\bot...\bot$.
If we observe $A_{n+1} = \bot\bot...\top$ with $n+1$ symbols in the algorithm, by induction hypothesis,
we can immediately conclude that with $o_n = \blank, i_1j_1, \bot, i_2j_2, \bot, ..., i_nj_n, \bot$,
$(\frac{1}{3})^{2n+1} \Pr_a(A_n|\od_1) = \Pr_m(o_n|d_1)$ and $(\frac{1}{3})^{2n+1} \Pr_a(A_n|\od_2) = \Pr_m(o_n|d_2)$.
Since the last symbol in $A_n$ is $\bot$, new query $f_{n+1}$ must be posed in the algorithm and the new query results $i_{n+1}$, $j_{n+1}$
are going to be perturbed and compared with the perturbed threshold. In the hmm model, after observing the $n$th $\bot$ in the current distribution,
transition to the new distribution occurs where states $\tilde{t}_{t'}r_{i_{n+1}}$  and $\tilde{t}_{t'}r_{j_{n+1}}$ having non-zero probabilities,
with transition probability $\frac{1}{3}$. And the common observation in theses states is $i_{n+1}j_{n+1}$ with observing probability $\frac{1}{3}$. Then queries are further perturbed and we only filter the states $\tilde{t}_{t'}\tilde{r}_{k'}$ where the perturbed query results are above the perturbed threshold, with the common observation $\top$. Since we follow the steps of the algorithm to make transitions and add two multipliers of $\frac{1}{3}$ , we can conclude that under the sequence $o_{n+1} = o_n, i_{n+1}j_{n+1}, \top$, $(\frac{1}{3})^{2(n+1)+1} \Pr_a(A_{n+1}|\od_1) = \Pr_m(o_{n+1}|d_1)$ and $(\frac{1}{3})^{2(n+1)+1} \Pr_a(A_{n+1}|\od_2) = \Pr_m(o_{n+1}|d_2)$.

Note that the above mapping process is one-to-one correspondence. Thus the proof is finished.
\end{proof}

Then we can prove the differential privacy results.
\begin{theorem}
  The model used in Fig.~\ref{figure:hmm-above-threshold} satisfies
  $\epsilon-$differential privacy, i.e, Algorithm~\ref{algorithm:pufferfish-check}
  returns ``unsat'' for all the feasible observation sequences of lengths $k$,
  if and only if Algorithm~\ref{algorithm:above-threshold}
  satisfies $\epsilon-$differential privacy.
\end{theorem}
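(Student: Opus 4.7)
The plan is to derive both implications from the one-to-one correspondence established in Lemma~\ref{lemma1}. That lemma already pins down the crucial quantitative relationship: for each algorithm output $A_n=a_1a_2\cdots a_n$ on a neighboring pair $\od_1,\od_2$, there is a bijectively associated initial distribution pair $(d_1,d_2)$ and observation sequence $o_n$ with
\[
\left(\tfrac{1}{3}\right)^{2n+1}\Pr_a(A_n\mid \od_i) = \Pr_m(o_n\mid d_i),\qquad i=1,2.
\]
Since the scaling factor $(1/3)^{2n+1}$ depends only on the length $n$ and not on the database, it cancels in any ratio. So the differential privacy inequality $\Pr_a(A_n\mid \od_1)\le e^{\epsilon}\Pr_a(A_n\mid \od_2)$ holds if and only if $\Pr_m(o_n\mid d_1)\le e^{\epsilon}\Pr_m(o_n\mid d_2)$. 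The whole proof is then an organized enumeration of this observation over both directions.

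For the $(\Leftarrow)$ direction I would fix arbitrary neighboring databases $\od_1,\od_2$, an output sequence $A_n$ of Algorithm~\ref{algorithm:above-threshold}, and let $(d_1,d_2,o_n)$ be the associated triple from Lemma~\ref{lemma1}. The assumption that the model satisfies $\epsilon$-differential privacy (i.e., Algorithm~\ref{algorithm:pufferfish-check} returns ``unsat'' for all feasible observation sequences of every length $k$) applied to $o_n$ under initial distributions $d_1,d_2$ yields $\Pr_m(o_n\mid d_1)\le e^{\epsilon}\Pr_m(o_n\mid d_2)$; cancelling $(1/3)^{2n+1}$ on both sides gives the required inequality on $\Pr_a$. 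A short remark is needed for outputs that do not end with $\top$ (the algorithm may continue indefinitely), but this reduces to the same argument applied to the corresponding prefix ending in $\bot$'s, for which Lemma~\ref{lemma1} applies verbatim.

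For the $(\Rightarrow)$ direction I would invert the bijection: given any pair of initial distributions $(d_1,d_2)$ appearing in the model that arises from neighboring databases, and any feasible observation sequence $o_n$ of length $k$, Lemma~\ref{lemma1} (used in the reverse direction of its construction) produces the unique algorithm outputs $A_n$ and databases $\od_1,\od_2$ that generate them; applying the algorithm's $\epsilon$-DP bound on $A_n$ and multiplying both sides by $(1/3)^{2n+1}$ gives the model-side inequality. The main obstacle I anticipate is justifying that the bijection in Lemma~\ref{lemma1} truly covers all the feasible observation sequences that the verification algorithm examines, i.e., that no ``spurious'' observation sequence is introduced by the auxiliary synchronization symbols ($00,01,\ldots,22,\spadesuit,\heartsuit,\diamondsuit,\clubsuit$) in the HMM of Fig.~\ref{figure:hmm-above-threshold}. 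I would argue this by noting that the unique observations $\spadesuit,\heartsuit,\diamondsuit,\clubsuit$ are placed so that no observation sequence involving them appears in both halves of the model, and the observations $jl$ with $|j-l|\le 1$ are exactly the pairs of query results realizable by neighboring databases, so every feasible sequence does correspond to an algorithm execution on a legitimate neighboring pair. Once this correspondence is verified, the equivalence follows immediately.
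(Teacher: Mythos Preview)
Your proposal is correct and takes essentially the same approach as the paper: both directions are derived from Lemma~\ref{lemma1}, exploiting that the common scaling factor $(1/3)^{2n+1}$ cancels in the ratio of probabilities, and both handle the feasibility issue by noting that observation sequences involving $\spadesuit,\heartsuit,\diamondsuit,\clubsuit$ cannot appear in both halves of the model. Your write-up is in fact more explicit than the paper's (which argues one direction by contrapositive and merely says ``by similar analysis''); one small remark is that your labels $(\Leftarrow)$ and $(\Rightarrow)$ are swapped relative to the standard reading of ``model satisfies \ldots\ if and only if algorithm satisfies \ldots''.
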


\begin{proof}
  Feasible observation sequences of lengths $k$ mean that Algorithm~\ref{algorithm:pufferfish-check} only checks paths that could represent
  complete execution paths of the Algorithm~\ref{algorithm:above-threshold}. For instance $k$ can't be $1$, which only
  represents the initial distributions of two databases in Fig.~\ref{figure:hmm-above-threshold}.
  Moreover, the observation sequences can't contain any of symbols $\spadesuit$,
  $\heartsuit$, $\diamondsuit$, $\clubsuit$: paths contain these symbols don't represent practical executions in the Algorithm~\ref{algorithm:above-threshold}.
  Since one of the neighboring data distributions must have
  probability $0$ for the unique appearance of these symbols in the half part of Fig.~\ref{figure:hmm-above-threshold},
  to filter out these observation paths, one just needs to add constraints that the observation probabilities
  can't be strictly equal to $0$.

  ``If'' direction: If the algorithm satisfies $\epsilon-$differential privacy,
  the probabilities of observing any length of outputs $A = \bot\bot...\top$ are mathematically similar starting from neighboring databases, $\od_1$ and $\od_2$. That is,
  \begin{equation}\label{eq1}
     e^{-\epsilon} \Pr_a(A|\od_2) \le \Pr_a(A|\od_1)\le e^\epsilon \Pr_a(A|\od_2).
  \end{equation}
  Using Lemma~\ref{lemma1}, we can directly conclude that
  \begin{equation}\label{eq2}
     e^{-\epsilon} \Pr_m(o|d_2) \le \Pr_m(o|d_1)\le e^\epsilon \Pr_m(o|d_2).
  \end{equation}
  Here $d_1$, $d_2$ and $o$ correspond to $\od_1$, $\od_2$ and $A$ in Lemma~\ref{lemma1}. Since $A$ can be any possible output,
  each can be mapped into an observation sequence , which makes up all the feasible observation sequences in the model.
  This actually verifies that our model satisfies $\epsilon$-differential privacy.

  ``Only if'' direction: If the algorithm doesn't satisfy $\epsilon-$differential privacy, there's a sequence $A = \bot\bot...\top$ with observing probabilities differing too much from initial distribution $\od_1$ and $\od_2$. By applying the similar analysis procedure, we can prove that the model in Fig.~\ref{figure:hmm-above-threshold} doesn't satisfy $\epsilon-$differential privacy.
\end{proof}

\bibliographystyle{splncs04}
\bibliography{refs}

\end{document}